\newtheorem{prop}{Proposition}
\newcommand{\thre}{V^\ast}
\newcommand{\optipri}{p^{(\rho,\gamma)}}
\begin{document}

\title{Stock loans in incomplete markets}

\author{ Matheus R Grasselli\footnotemark[1] \  \and Cesar G. Velez
\footnotemark[2]}
%\affil{$^\ast$Department of Mathematics \& Statistics, McMaster University\\
%$^{\dag}$Facultad de Ciencias, Universidad Nacional de Colombia}

\renewcommand{\thefootnote}{\fnsymbol{footnote}} \footnotetext[1]{%
McMaster University} 
\footnotetext[2]{%
Universidad Nacional de Colombia } 

%\markboth{M.R. Grasselli and C.G.Velez}{Applied Mathematical Finance}

\maketitle

\begin{abstract}
A stock loan is a contract whereby a stockholder uses shares as collateral to borrow money from a bank or financial institution. In Xia and Zhou (2007), this contract is modeled as a perpetual American option with a time varying strike and analyzed in detail within a risk--neutral framework. In this paper, we extend the valuation of such loans to an incomplete market setting, which takes into account the natural trading restrictions faced by the client. When the maturity of the loan is infinite, we use a time--homogeneous utility maximization problem to obtain an exact formula for the value of the loan fee to be charged by the bank. For loans of finite maturity, we characterize the fee using variational inequality techniques. In both cases we show analytically how the fee varies with the model parameters and illustrate the results numerically.
\end{abstract}

\noindent
{\bf Keywords:} Stock loans, indifference pricing, illiquid assets, incomplete markets.

\section{Introduction}

A stock loan is a contract between two parties: the lender, usually a bank or other financial institution providing a loan, and the borrower, represented by a client who owns one share of a stock used as collateral for the loan. Several reasons might motivate the client to get into such a deal. For example he might not want to sell his stock or even face selling restrictions, while at the same time being in need of available funds to attend to another financial operation. 

Our main task consists of determining the fair values of the parameters of the loan, particularly the value of the fee that the bank
charges for the service along with the interest rate to be charge over the amount borrowed, taking into account the stock price at the moment of 
taking the loan. In addition, we take into account the fact that the bank typically collects any dividends paid by the stock for the duration of the loan. Finally, whereas the client can recover the stock at any time by paying the loan principal plus interest, he is not obliged to do so, even if the stock price falls down, and this optionality also needs to be accounted in the valuation of the loan. 

In \cite{XiaZhou07}, a stock loan is modeled as a perpetual American option with a time varying strike and analyzed in detail using probabilistic methods within the Black-Scholes framework. Assuming that the risk neutral dynamics of the stock follows a geometric Brownian motion, they obtained explicit formulas for the bank's fee in terms of the amount lent and the stock price at the moment of signing the loan. Implicit in their use of the risk neutral paradigm is the assumption that the option can be replicated by trading in the underlying stock and the money market. Whereas this is certainly plausible from the bank's point of view, we argue that neither type of trade is readily available for the client, who presumably does not have unrestricted access to the money market (hence the need to post collateral in the form of a stock) nor can freely trade in the stock (otherwise he would simply sell the stock instead of take the loan). Moreover, while risk neutral valuation yields the fair price at which the option itself can be traded in the market without introducing arbitrage opportunities, a stock loan typically cannot be sold or bought in a secondary market once it is initiated. In other words, the client does not operate in the frictionless market that is assumed by the Black--Scholes framework.  
 
Accordingly, we treat a stock loan as an option in an incomplete market. We assume that the client cannot trade directly in the underlying stock, but is allowed to trade in a portfolio of assets that is imperfectly correlated to the stock. In this way, since the client cannot perfectly hedge the embedded optionality, he faces some non-diversifiable risk for the duration of the loan. We assume that the client is a risk averse economic agent and model his preferences by an exponential utility function. We then use utility indifference arguments to value the stock loan from the point of view of the client both for infinite maturity, where semi-explicit formulas are still available, and for finite
maturity, where numerical computations are needed. Finally we assume that the bank is well diversified and relate the fee charged by the bank
with the hedging cost for a barrier-type option reflecting the exercise behavior of the client.

Although we present the analysis using a financial asset as the collateral, it is clear that the same results can be applied to loans against other types of assets, such as real estate or inventories, provided their value is observable and follows a dynamics that can be modeled according to \eqref{market}.

\section{Model set up}

We consider a market consisting of two correlated assets $S$ and $V$ with {\em discounted} prices given by
\begin{equation}
\label{market}
\begin{aligned}
dS_{t} & =(\mu_{1}-r) S_{t}dt+\sigma_{1} S_{t}dW^1_{t} \\
dV_{t} & =(\mu_{2}-r) V_{t}dt +\sigma_{2}V_{t}(\rho dW^1_t+\sqrt{1-\rho^{2}}dW^2_t),
\end{aligned}
\end{equation}
for $t_0\leq t\leq T\leq \infty$, where $W=(W^1,W^2)$ is a standard two--dimensional Brownian motion. We suppose further that the client can trade dynamically by holding $H_t$ units of the asset $S_t$ and investing the remaining of his wealth in a bank account with normalized value $B_t=e^{r(t-t_0)}$ for a constant interest rate $r$. It follows that the discounted value of the corresponding self--financing portfolio satisfies 
\begin{equation}
dX^\pi_t= \pi_t(\mu_1-r)dt+\pi_t\sigma_1dW^1_t, \quad t_0\leq t\leq T,
\label{wealth}
\end{equation}
where $\pi_t=H_tS_t$.

At a given time $t_0$, the client borrows an amount $L$ from a bank leaving the asset with value $V_{t_0}$ as a collateral. We assume that the bank collects the dividends paid by the underlying asset $V$ at a rate $\delta$ for the duration of the loan. In addition, the bank charges the client a fee $c$ and stipulates an interest rate $\alpha$ to be charged on the loan amount $L$, so that the client can redeem the asset with value $e^{r(t-t_0)}V_t$ at time $t_0\leq t\leq T$ by paying an amount $e^{\alpha (t-t_0)}L$.  At the maturity time $T$, we assume that the client needs to decide between repaying the loan or forfeiting the underlying asset indefinitely.

In other words, at the beginning of the loan the client gives the bank an asset worth $V_{t_0}$ and receives a net amount $(L-c)$ plus the option to buy back an asset with market price $e^{r(t-t_0)}V_t$ for an amount $e^{\alpha (t-t_0)}L$. Denoting the cost of this option for the bank by $C_{t_0}$, the loan parameters are related by 
\begin{equation}
c=L+C_{t_0}-V_{t_0}
\label{relation}
\end{equation}

\section{Infinite maturity}
\label{infinite}

Let us first assume that $T=\infty$ and that $\alpha=r$. Given an exponential utility $U(x)=-e^{-\gamma x}$, we consider a client trying to maximize the expected utility of discounted wealth. We assume that, upon repaying the loan at time $\tau$, the borrower adds the discounted payoff $(V_\tau-L)$ to his discounted wealth $X^\pi_\tau$ and continues to invest optimally. Accordingly, having taken the loan at time $t_{0}$, the borrower needs to solve the following optimization problem:
\begin{equation}\label{opti1}
 G(x,v)=\sup_{(\tau,\pi)\in\mathcal{A}}\mathbb{E}\big[-e^{\frac{(\mu_1-r)^2}{2\sigma^2}\tau}e^{-\gamma(X^{\pi}_{\tau}+(V_{\tau}-L)^{+})}\big| X^\pi_{t_{0}}=x,V_{t_{0}}=v\big].
\end{equation}
Here $\mathcal{A}$ is a set of admissible pairs $(\tau,\pi)$, where $\tau\in[0,\infty]$ is a stopping time and $\pi$ is a portfolio process. Observe that the factor $e^{\frac{(\mu_1-r)^2}{2\sigma^2}\tau}$ leads to a horizon unbiased optimization problem (see Appendix 1 of \cite{Henderson07} for details), while the choice $\alpha=r$ removes the dependence on time from the factor $e^{(\alpha-r)(t-t_{0})}L$ at the repayment date. Combined with the infinite maturity assumption, this allows us to deduce that the borrower should decide to pay back the loan at the first time that $V$ reaches a stationary threshold 
$\thre$, that is
\begin{equation}\tau^\ast=\inf\{s\geq t_0:V_{s}=\thre\}.
\label{tau1}
\end{equation}

We follow  \cite{HodgesNeuberger89} and define the indifference value for the option to pay back the loan as the amount $p(v)$ satisfying 
\begin{equation}
 G(x,0)=G(x-p(v),v).
\end{equation}

The following proposition summarizes the resulst in \cite{Henderson07} regarding the value function $G(x,v)$, the threshold $\thre$ and the indifference value 
$p(v)$.
\begin{prop}[Henderson, 2007] The function $G(x,v)$ solves the following non-linear HJB equation 
\begin{equation}\label{HJB1}
\frac{(\mu_1-r)^2}{2\sigma_1^2}G+\frac{\sigma_{2}^{2}v^2}{2}G_{vv}+(\mu_{2}-r)vG_{v}-\frac{(\rho\sigma_{1}\sigma_{2}vG_{xv}+(\mu_{1}-r)G_{x})^{2}}{\sigma^{2}_{1}G_{xx}}=0
\end{equation}
subject to the following boundary, value matching and smooth pasting conditions:
\begin{align*}
 G(x,0) & = -e^{-\gamma x}\\
G(x,\thre) & =-e^{-\gamma(x+\thre-L)}\\
G_{v}(x,\thre) & = \gamma e^{-\gamma(x+\thre-L)}.
\end{align*}
Let $\beta=1-\frac{2}{\sigma_2}\left(\frac{\mu_2-r}{\sigma_2}-\rho\frac{\mu_1-r}{\sigma_1}\right)$. If $\beta>0$, the threshold $\thre>L$ is the unique solution to 
\begin{equation}
\thre -L=\frac{1}{\gamma(1-\rho^{2})}\log\left[1+\frac{\gamma(1-\rho^{2})\thre}{\beta}\right]
\label{shold}
\end{equation}
and the solution to (\ref{HJB1}) and associated conditions is given by
\begin{equation}
G(x,v)=\left\{\begin{aligned}
& -e^{-\gamma x}\bigg[1-(1-e^{-\gamma(\thre-L)(1-\rho^{2})})\big(\frac{v}{\thre}\big)^{\beta}\bigg]^{\frac{1}{1-\rho^{2}}}, \quad\mbox{if}\quad v<V^\ast \\
&-e^{\gamma x}e^{-\gamma(v-L)}, \quad\mbox{if}\quad v \geq V^\ast.
\end{aligned}\right.
\end{equation} 
In this case, the indifference value $\optipri$ is given by
\begin{equation}
\label{indiff}
 p(v)=\left\{\begin{aligned}
&-\frac{1}{\gamma(1-\rho^{2})}\log\bigg[(e^{-\gamma(\thre-L)(1-\rho^{2})}-1)\big(\frac{v}{\thre}\big)^{\beta}+1\bigg], \quad\mbox{if}\quad v<V^\ast \\
&(v-L), \quad\mbox{if}\quad v \geq V^\ast. \end{aligned}\right.
\end{equation}
Alternatively, if $\beta\leq 0$, then the smooth pasting fails and there is no solution to (\ref{HJB1}) and associated conditions. In this case, $\thre =\infty$ and the option to repay the loan is never exercised. 
\label{priceop}
\end{prop}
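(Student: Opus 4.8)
The plan is to read off a candidate value function from the dynamic programming principle and then tame the resulting nonlinearity with the exponential/distortion structure. First I would derive the HJB equation (\ref{HJB1}). Writing $\mathcal L^\pi$ for the generator of the joint process $(X^\pi,V)$ from (\ref{wealth})--(\ref{market}) and setting $\lambda=\frac{(\mu_1-r)^2}{2\sigma_1^2}$, the martingale-optimality principle says that $e^{\lambda t}G(X^\pi_t,V_t)$ is a supermartingale for every admissible $\pi$ and a martingale at the optimum, so in the continuation region $\{v<\thre\}$ one has $\sup_\pi[\lambda G+\mathcal L^\pi G]=0$. Since $G_{xx}<0$, the map $\pi\mapsto\lambda G+\mathcal L^\pi G$ is a concave quadratic; the pointwise maximizer is $\pi^\ast=-\frac{(\mu_1-r)G_x+\rho\sigma_1\sigma_2 vG_{xv}}{\sigma_1^2G_{xx}}$, and substituting it produces exactly the nonlinear term of (\ref{HJB1}). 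The three side conditions encode the stopping problem: because $v=0$ is absorbing for $V$, the option becomes worthless there and $G(x,0)=-e^{-\gamma x}$ is the horizon-unbiased Merton value, while value matching and smooth pasting at the free boundary $v=\thre$ express $C^{1}$-continuity of $G$ against the post-exercise value $-e^{-\gamma(x+v-L)}$.

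The central reduction is the separation ansatz $G(x,v)=-e^{-\gamma x}\phi(v)$ with $\phi>0$, which collapses (\ref{HJB1}) to a scalar nonlinear ODE in which the zeroth-order terms cancel, leaving $\frac{\sigma_2^2v^2}{2}\phi''+\big[(\mu_2-r)-\frac{\rho\sigma_2(\mu_1-r)}{\sigma_1}\big]v\phi'-\frac{\rho^2\sigma_2^2v^2(\phi')^2}{2\phi}=0$. I would then linearize through the distortion $\phi=\psi^{1/(1-\rho^2)}$: a direct computation shows that the coefficient of the surviving $(\psi')^2/\psi$ term is proportional to $(k-1)-\rho^2k$ with $k=1/(1-\rho^2)$, which vanishes identically, so $\psi$ solves the linear Euler equation $\frac{\sigma_2^2}{2}v^2\psi''+\big[(\mu_2-r)-\frac{\rho\sigma_2(\mu_1-r)}{\sigma_1}\big]v\psi'=0$. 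Its indicial equation has roots $0$ and exactly $\beta$, so $\psi(v)=A+Bv^\beta$. The condition at $v=0$ forces $\psi(0)=1$; for $\beta>0$ this selects $A=1$, and value matching fixes $B$ through $\psi(\thre)=e^{-\gamma(\thre-L)(1-\rho^2)}$, reproducing the stated formula for $G$.

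It then remains to locate the free boundary and read off the indifference price. Imposing smooth pasting on the candidate $G$ and simplifying (using $(1-\rho^2)(k-1)=\rho^2$) reduces, after elementary algebra, to the single transcendental equation (\ref{shold}). I would establish existence and uniqueness of a root $\thre>L$ by studying $f(V)=V-L-\frac{1}{\gamma(1-\rho^2)}\log[1+\gamma(1-\rho^2)V/\beta]$: one has $f(L)<0$, $f(V)\to+\infty$, and $f''(V)=\frac{\gamma(1-\rho^2)}{(\beta+\gamma(1-\rho^2)V)^2}>0$ for $\beta>0$, so convexity together with the boundary behaviour gives a unique zero in $(L,\infty)$. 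The indifference value follows at once from $G(x,0)=G(x-p(v),v)$, which under the ansatz becomes $e^{\gamma p(v)}\phi(v)=1$, i.e.\ $p(v)=-\tfrac1\gamma\log\phi(v)$; substituting $\phi$ yields (\ref{indiff}).

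The main obstacle is twofold. Analytically, the delicate point is the degenerate regime $\beta\le0$: when $\beta<0$ the nontrivial Euler mode $v^\beta$ blows up as $v\to0$, so the requirement $\psi(0)=1$ forces $B=0$ and leaves no free constant with which to meet both matching conditions, whence smooth pasting has no solution and $\thre=\infty$; I would supplement this with a direct argument that the continuation value dominates the exercise value for all $v$, so repayment is never optimal. Conceptually, the more substantial step is the verification theorem: one must check that the constructed $G$ is smooth enough across the free boundary, that $\pi^\ast$ and the stopping rule $\tau^\ast$ in (\ref{tau1}) are admissible, and that $\lambda G+\mathcal L^\pi G\le0$ holds in the stopping region, so that the supermartingale/martingale argument identifies $G$ with the value function and confirms optimality of the threshold policy. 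Controlling integrability and the behaviour of $e^{\lambda\tau}G(X^\pi_\tau,V_\tau)$ as $\tau\to\infty$ is where the real work lies.
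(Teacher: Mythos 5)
The paper itself contains no proof of Proposition \ref{priceop}: it is explicitly presented as a summary of results from \cite{Henderson07}, so there is no internal argument to compare against. Your proposal is a correct reconstruction of the argument in that cited source, and follows essentially the same route: the HJB equation via martingale optimality, the separation ansatz $G(x,v)=-e^{-\gamma x}\phi(v)$, the distortion $\phi=\psi^{1/(1-\rho^2)}$ whose exponent is chosen exactly so that the coefficient $(k-1)-\rho^2 k$ of the quadratic gradient term vanishes, the resulting Euler equation with indicial roots $0$ and $\beta$, value matching and smooth pasting reducing to \eqref{shold}, convexity of $f(V)=V-L-\frac{1}{\gamma(1-\rho^2)}\log\left[1+\gamma(1-\rho^2)V/\beta\right]$ for uniqueness of $\thre$, the identity $p(v)=-\frac{1}{\gamma(1-\rho^2)}\log\psi(v)$ for the indifference value, and the blow-up of the mode $v^\beta$ at the origin when $\beta\le 0$. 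I checked the key algebra (the cancellation under the distortion, and the reduction of smooth pasting to \eqref{shold} using $(1-\rho^2)(k-1)=\rho^2$), and it is sound; you also correctly identify the verification theorem as the substantive remaining work, which is where the real effort lies in \cite{Henderson07}.

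Two points your computation brings to light about the statement as printed. First, the nonlinear term of \eqref{HJB1} is missing a factor of $2$: the pointwise maximization over $\pi$ yields $-\frac{\left(\rho\sigma_1\sigma_2 vG_{xv}+(\mu_1-r)G_x\right)^2}{2\sigma_1^2 G_{xx}}$, and only with this factor do the zeroth-order terms cancel under the separation ansatz; the displayed solution solves the factor-$2$ equation, not the equation as literally written. Second, in the exercise region $v\ge\thre$ the displayed solution should read $-e^{-\gamma x}e^{-\gamma(v-L)}$ rather than $-e^{\gamma x}e^{-\gamma(v-L)}$; your ansatz produces the correct sign. Both are typographical slips in the paper that your derivation implicitly corrects.
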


Let us assume from now on that $S$ is the discounted price of the market portfolio, so that the equilibrium rate of return 
$\overline\mu_2$ on the asset $V$ satisfies the CAPM condition 
\begin{equation}
\label{CAPM}
\frac{\overline\mu_2-r}{\sigma_2}=\rho\frac{\mu_1-r}{\sigma_1}.
\end{equation}
The dividend rate paid by $V$ is then $\delta=\overline\mu_2-\mu_2$, and we have that 
\begin{equation}
\label{beta}
\beta=1-\frac{2}{\sigma_2}\left(\frac{\mu_2-r}{\sigma_2}-\rho\frac{\mu_1-r}{\sigma_1}\right)=1+\frac{2\delta}{\sigma^2_2}>0.
\end{equation}

Because the bank is well--diversified and can hedge in the financial market by directly trading the asset $V$, the cost $C_{t_{0}}=C(V_{t_{0}})$ of granting the repayment option is given by the complete market price of a perpetual barier--type call option on $e^{r(t-t_0)}V_t$ with strike $e^{\alpha(t-t_0)}L$ exercised at the borrower's optimal exercise boundary obtained in Proposition \ref{priceop}. In other words, denoting by $Q$ the unique risk--neutral measure for the complete market consisting of $S$ and $V$, we have the following result.
\begin{prop} Assuming that the borrower exercises the repayment option optimally according to Proposition \ref{priceop}, the cost of this option for the bank is given by
\begin{equation}
\label{cost}
C(v)=\left\{\begin{aligned}
& (V^\ast-L)\mathbb E^Q\left[\mathbf{1}_{\{\tau^{*}<\infty\}}\right], \quad\mbox{if}\quad v<V^\ast  \\
& v-L, \qquad\qquad\qquad\qquad \mbox{if}\quad v \geq V^\ast
\end{aligned}\right.
=\left\{\begin{aligned}
& (V^\ast-L) \left(\frac{v}{V^{\ast}}\right)^{\beta}\quad\mbox{if}\quad v<V^\ast  \\
& v-L, \quad\qquad\qquad\mbox{if}\quad v \geq V^\ast.
\end{aligned}\right.
\end{equation}
\end{prop}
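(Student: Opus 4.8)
The plan is to price the repayment option by standard risk--neutral valuation in the complete market $(S,V)$ and to reduce the computation to a single hitting probability for $V$. Since the bank can trade $V$ directly, the discounted cost at time $t_0$ of delivering the discounted payoff $(V_{\tau^\ast}-L)$ at the borrower's exercise time $\tau^\ast$ is
\begin{equation}
C(v)=\mathbb E^Q\big[(V_{\tau^\ast}-L)\mathbf 1_{\{\tau^\ast<\infty\}}\,\big|\,V_{t_0}=v\big],
\end{equation}
where $Q$ is the unique equivalent martingale measure and $\tau^\ast=\inf\{s\geq t_0:V_s=\thre\}$ is the optimal stopping time identified in Proposition \ref{priceop}. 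The key simplification is that on $\{\tau^\ast<\infty\}$ the process $V$ equals the constant threshold $\thre$ at $\tau^\ast$, so the random payoff factors out of the expectation and $C(v)=(\thre-L)\,Q(\tau^\ast<\infty\mid V_{t_0}=v)$, which is precisely the first expression in \eqref{cost}. For $v\geq\thre$ the borrower exercises immediately ($\tau^\ast=t_0$) and $C(v)=v-L$.

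Next I would determine the law of $V$ under $Q$. Because $V$ pays dividends continuously at rate $\delta$, it is the discounted cum--dividend gains process, not the discounted price itself, that is a $Q$--martingale; equivalently, the undiscounted price carries $Q$--drift $(r-\delta)$. Undoing the discounting then gives
\begin{equation}
dV_t=-\delta V_t\,dt+\sigma_2 V_t\,dW^Q_t,
\end{equation}
for a one--dimensional $Q$--Brownian motion $W^Q$. Hence $\log(V_t/v)$ is a Brownian motion with volatility $\sigma_2$ and drift $-(\delta+\tfrac12\sigma_2^2)$, which is negative exactly when $\beta=1+2\delta/\sigma_2^2>0$, the standing assumption of this section recorded in \eqref{beta}.

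It then remains to evaluate $Q(\tau^\ast<\infty)$, i.e.\ the probability that $\log(V/v)$, started at $0$, ever reaches the positive level $a=\log(\thre/v)$ (recall $v<\thre$). I would invoke the classical fact that a Brownian motion with drift $\mu<0$ and volatility $\sigma$ reaches a level $a>0$ with probability $e^{2\mu a/\sigma^2}$. Substituting $\mu=-(\delta+\tfrac12\sigma_2^2)$ and $\sigma=\sigma_2$ yields the exponent $-\beta\log(\thre/v)$, so that $Q(\tau^\ast<\infty\mid V_{t_0}=v)=(v/\thre)^{\beta}$ and therefore $C(v)=(\thre-L)(v/\thre)^{\beta}$, matching the closed form in \eqref{cost}. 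As an independent check one can verify that $v\mapsto (v/\thre)^{\beta}$ is the bounded solution of the associated ODE $-\delta v\,\phi'+\tfrac12\sigma_2^2 v^2\phi''=0$ on $(0,\thre)$ with $\phi(\thre)=1$ and $\phi(0^+)=0$; its indicial equation has roots $0$ and $\beta$, and the transient behaviour of $V$ selects the exponent $\beta$.

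The main obstacle I anticipate is purely in getting the dividend adjustment right: one must be careful that it is the gains process and not the discounted price that is a $Q$--martingale, so that $V$ acquires the drift $-\delta$ rather than $0$. Once this is settled, the exponent in the hitting probability collapses to $-\beta\log(\thre/v)$, and the sign condition needed for the hitting--time formula ($\mu<0$) coincides exactly with $\beta>0$, which is guaranteed here by \eqref{beta}.
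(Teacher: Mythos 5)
Your proposal is correct and follows essentially the same route as the paper: both reduce the bank's cost to $(\thre-L)$ times the risk--neutral probability that the geometric Brownian motion $V$, whose $Q$--dynamics $dV_t=-\delta V_t\,dt+\sigma_2V_t\,dW^Q_t$ you derive via the same dividend adjustment, hits the barrier $\thre$ in finite time. The only difference is cosmetic: where the paper invokes ``standard Laplace transform techniques,'' you write out the classical hitting--probability formula $e^{2\mu a/\sigma^2}$ for drifted Brownian motion (plus an ODE cross--check), which yields the same exponent $\beta$.
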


\begin{proof} Observe that the risk--neutral dynamics for $V$ is 
\begin{equation}
dV_t=-\delta V_{t}dt + \sigma_{2}V_{t}dW_{t}^{Q},
\end{equation}
where $W^{Q}$ is a Brownian motion. Therefore $E^Q\left[\mathbf{1}_{\{\tau^{*}<\infty\}}\right]$ corresponds to the risk--neutral probability that the geometric Brownian motion $V_{t}$ started at $V_{t_{0}}=v$ will cross the barrier $V^{\ast}$ in a finite time. The result then follows from standard Laplace transform techniques. 
\end{proof}

We can now use \eqref{relation} and \eqref{cost} to establish that 
\begin{equation}
\label{fee}
c=L+C_{t_0}-V_{t_0}=\left\{\begin{aligned}
& L+(V^\ast-L)\left(\frac{V_{t_{0}}}{V^{\ast}}\right)^{\beta}-V_{t_0}, \quad\quad\mbox{if}\quad V_{t_0}<V^\ast \\
& 0, \quad\qquad\qquad\qquad\qquad\quad\quad\quad\quad\mbox{if}\quad V_{t_0} \geq V^\ast.
\end{aligned}\right.
\end{equation}

As shown in Proposition 3.5 of \cite{Henderson07}, one can find by direct differentiation of expressions \eqref{shold} and \eqref{indiff} that both the threshold 
$V^{*}$ and the indifference value $p(v)$ are increasing in $\rho^{2}$. In other words, all things being equal,  a higher degree of market incompleteness, expressed as a smaller absolute value for the correlation between the stock and the market portfolio, lead the client to exercise the option to repay the loan earlier than in the complete market case, resulting in a smaller indifference value for the option. Similarly both $V^{*}$ and $p(v)$ are decreasing in $\gamma$, meaning that a higher degree of risk aversion has a similar effect in decreasing the value of the option to repay the loan. These properties carry over to the loan fee obtained in expression \eqref{fee}, as we establish in the next proposition. 

\begin{prop}
\label{dep-inf}
The loan fee:
\begin{enumerate}
\item decreases as the risk aversion $\gamma$ increases;
\item decreases as the dividend rate $\delta$ increases;
\item increases as $\rho^{2}$ increases.
\end{enumerate}
Moreover, its limiting values either as $\rho^{2}\to1$ or $\gamma\to 0$ coincide and are given by
\begin{equation}
\label{fee_limit}
c=\left\{\begin{aligned}
& L+(\widetilde V-L)\left(\frac{V_{t_{0}}}{\widetilde V}\right)^{\beta}-V_{t_0}, \quad\quad\mbox{if}\quad V_{t_0}<V^\ast \\
& 0, \quad\qquad\qquad\qquad\qquad\quad\quad\quad\quad\mbox{if}\quad V_{t_0} \geq V^\ast.
\end{aligned}\right.
\end{equation}
where $\widetilde V=\frac{\beta}{\beta-1}L=\left(1+\frac{\sigma^{2}_{2}}{2\delta}\right)L$. 
\end{prop}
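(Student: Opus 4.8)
The plan is to reduce all three monotonicity claims to properties of the bank's cost $C$. Since \eqref{relation} gives $c = L + C(V_{t_0}) - V_{t_0}$ with $L$ and $V_{t_0}$ held fixed, the fee inherits exactly the monotonicity of $C(V_{t_0}) = (V^{\ast}-L)\,(V_{t_0}/V^{\ast})^{\beta}$ in each parameter; only the regime $V_{t_0} < V^{\ast}$ needs attention, as $V_{t_0}\geq V^{\ast}$ gives $c=0$. The one inequality that drives every sign computation is $L < V^{\ast} < \widetilde V$, which I would establish directly from \eqref{shold}. Writing $\epsilon = \gamma(1-\rho^{2})>0$ and $h(V)=V-L-\epsilon^{-1}\log(1+\epsilon V/\beta)$, the threshold equation reads $h(V^{\ast})=0$; since $\beta>1$ by \eqref{beta} one has $h'(V)=1-(\beta+\epsilon V)^{-1}>0$, so $h$ is strictly increasing, while $h(L)<0$ and, using $\widetilde V-L=\widetilde V/\beta=L/(\beta-1)$ together with $y>\log(1+y)$, one gets $h(\widetilde V)>0$. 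Hence $L<V^{\ast}<\widetilde V$.

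For parts (1) and (3) I would use that $\beta$ is independent of both $\gamma$ and $\rho^{2}$, so $C(V_{t_0}) = V_{t_0}^{\,\beta}\,f(V^{\ast})$ with $f(V)=(V-L)V^{-\beta}$. Differentiating, $f'(V)=V^{-\beta-1}\,[\,V(1-\beta)+\beta L\,]$, which is positive exactly when $V<\widetilde V$; by the bound above this holds at $V=V^{\ast}$, so $f$ is increasing there. Combining with Proposition 3.5 of \cite{Henderson07}, which gives that $V^{\ast}$ is decreasing in $\gamma$ and increasing in $\rho^{2}$, the chain rule shows that $C(V_{t_0})$, and therefore $c$, decreases in $\gamma$ and increases in $\rho^{2}$.

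Part (2) is where I expect the main difficulty, because $\delta$ enters $C$ through two competing channels: directly via $\beta=1+2\delta/\sigma_{2}^{2}$ and indirectly via $V^{\ast}$. I would first show $V^{\ast}$ is decreasing in $\beta$ by implicit differentiation of \eqref{shold}: with $h$ as above, $\partial h/\partial V>0$ and $\partial h/\partial\beta = V^{\ast}/[\beta(\beta+\epsilon V^{\ast})]>0$ yield $dV^{\ast}/d\beta<0$. Then, writing $\log C = \beta\log(V_{t_0}/V^{\ast})+\log(V^{\ast}-L)$ and differentiating, the terms collect into
\[
\frac{1}{C}\frac{dC}{d\beta}=\log\!\Big(\frac{V_{t_0}}{V^{\ast}}\Big)+\frac{dV^{\ast}}{d\beta}\Big[\frac{1}{V^{\ast}-L}-\frac{\beta}{V^{\ast}}\Big].
\]
The first term is negative since $V_{t_0}<V^{\ast}$; the bracket equals $[V^{\ast}(1-\beta)+\beta L]/[V^{\ast}(V^{\ast}-L)]>0$, again by $V^{\ast}<\widetilde V$, and multiplying by $dV^{\ast}/d\beta<0$ makes the second term negative as well. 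Thus $dC/d\beta<0$, and since $d\beta/d\delta=2/\sigma_{2}^{2}>0$ the fee decreases in $\delta$.

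Finally, for the limiting values I would let $\epsilon=\gamma(1-\rho^{2})\to0$, which is precisely the regime $\rho^{2}\to1$ or $\gamma\to0$. Passing to the limit in \eqref{shold} through $\log(1+\epsilon V^{\ast}/\beta)=\epsilon V^{\ast}/\beta+o(\epsilon)$ gives $V^{\ast}-L=V^{\ast}/\beta$, so $V^{\ast}\to\widetilde V=\tfrac{\beta}{\beta-1}L=(1+\sigma_{2}^{2}/2\delta)L$; substituting this common limit into \eqref{fee} produces \eqref{fee_limit}.
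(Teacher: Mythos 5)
Your proposal is correct and follows essentially the same route as the paper, whose proof is only a sketch ("explicit differentiation of \eqref{fee}, \eqref{shold} and \eqref{beta}" for the monotonicity, plus the observation that under \eqref{CAPM} the limiting thresholds of Proposition 3.5 of Henderson reduce to $\widetilde V=\tfrac{\beta}{\beta-1}L$); you simply carry that program out in full, supplying the one ingredient the paper leaves implicit, namely the inequality $L<V^{\ast}<\widetilde V$ needed to sign $f'(V^{\ast})$ and to resolve the two channels of $\delta$-dependence through $\beta$. Your derivation of the limit $V^{\ast}\to\widetilde V$ directly from \eqref{shold} rather than by citing Henderson's limiting thresholds is a cosmetic, not substantive, difference.
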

\begin{proof}
The first part of the proposition follows by explicit differentiation of expressions \eqref{fee}, \eqref{shold} and \eqref{beta}. For the second part, observe that it follows from our equilibrium condition \eqref{CAPM} that both limiting thresholds in Proposition 3.5 of \cite{Henderson07} are given by $\widetilde V=\frac{\beta}{\beta-1}L$. Substituting expression \eqref{beta} for $\beta$ then completes the proof. 
\end{proof}
We conclude this section by observing that the limiting threshold $\widetilde V$ corresponds to the complete market threshold $a_{0}$ found in \cite{XiaZhou07} using risk-neutral valuation arguments. Consequently, provided $\alpha=r$, the complete market, risk--neutral setting can be recovered as a special case of our results. 

\section{Finite maturity}

\subsection{The free boundary problem}
\label{finite}

Consider $T<\infty$ and define the value function (see \cite{Merton69})
\begin{equation}
\label{merton}
 M(t,x)=\sup_{\pi\in {\cal A}_{[t,T]}}\mathbb{E}[-e^{-\gamma X^{\pi}_{T}}|X^\pi_{t}=x]=-e^{-\gamma x}e^{-\frac{(\mu_1-r)^2}{2\sigma^2}(T-t)},
\end{equation}
for $t_0\leq t\leq T$, where $X^\pi_t$ follows the dynamics \eqref{wealth} and ${\cal A}_{[t,T]}$ is the set of admissible investment policies on the interval $[t,T]$, which we take to be progressively measurable processes satisfying the integrability condition 
\[\mathbb{E}\left[\int_t^T \pi_s^2ds\right]<\infty.\]
As before, let the repayment time by a stopping time $\tau$ and assume that the borrower will add the discounted payoff 
$(V_\tau-e^{(\alpha-r)(\tau-t_0)}L)^+$ to his discounted wealth $X^\pi_\tau$ at time $\tau$ and then invest optimally until time $T$. Accordingly, having taken the stock loan at time $t_0$, the borrower consider the following optimization problem:
\begin{equation}
u(t_0,x,v)=\sup_{\tau\in{\cal T}[t_{0},T]}\sup_{\pi\in {\cal A}_{[t,\tau]}}\mathbb{E}[M(\tau,X^{\pi}_{\tau}+(V_\tau-e^{(\alpha-r)(\tau-t_0)}L)^+)|X^\pi_{t_0}=x,V_{t_0}=v],
\label{optimal}
\end{equation}
where ${\cal T}[t_{0},T]$ denotes the set of stopping times in the interval $[t_{0},T]$. The indifference value for the repayment option is then given by the amount  
$p$ satisfying  
\begin{equation}
\label{indiff_fin}
M(t_0,x)=u(t_0,x-p,v).
\end{equation}

It follows from the dynamic programming principle that the value function $u$ solves the free boundary problem
\begin{equation}
\label{comp1}
\left\{\begin{aligned}
 \frac{\partial u}{\partial t}+\displaystyle{\sup_\pi {\cal L}^\pi u} & \leq 0, \\
 u(t,x,v) & \geq \Lambda(t,x,v), \\
 \left(\frac{\partial u}{\partial t}+\displaystyle{\sup_\pi {\cal L}^\pi u}\right)&\cdot\left(u-\Lambda\right) =0,
\end{aligned}\right.
\end{equation}
for $(t,x,v)\in [t_0,T)\times\mathbb{R}\times (0,\infty)$, where
\[{\cal L}^\pi=(\mu_2-r)v\frac{\partial }{\partial v}+\frac{\sigma_2^2v^2}{2}\frac{\partial^2 }{\partial v^2}+
\pi(\mu_1-r)\frac{\partial }{\partial x}+\rho\pi\sigma_1\sigma_2 v\frac{\partial^2 }{\partial x\partial v}+\frac{\pi^2\sigma_1^2}{2}
\frac{\partial^2 }{\partial x^2}\]
is the infinitesimal generator of $(X^\pi,V)$ and 
\[\Lambda(t,x,v)=M(t,x+(v-e^{(\alpha-r)(t-t_0)}L)^+)\]
is the utility obtained from exercising the repayment option at time $t$. The boundary conditions for Problem \eqref{comp1} are 
\begin{equation}
\begin{split}
u(T,x,v)&=-e^{-\gamma[x+(v-e^{(\alpha-r)(T-t_0)}L)^+]} \\
u(t,x,0)&=-e^{-\gamma x}e^{-\frac{(\mu_1-r)^2}{2\sigma^2}(T-t)}.
\end{split}
\end{equation}

Using the factorization
\begin{equation}
\label{factor}
u(t,x,v)=M(t,x)F(t,v)^{\frac{1}{1-\rho^2}},
\end{equation}
we find that the corresponding free boundary problem for $F$ becomes 
\begin{equation}
\label{comp2}
\left\{\begin{aligned}
\frac{\partial F}{\partial t}+{\cal L}^0F&\geq 0, \\
 F(t,v)&\leq \kappa(t,v), \\
\left(\frac{\partial F}{\partial t}+{\cal L}^0F\right)&\cdot(F-\kappa)=0,
\end{aligned}\right.
\end{equation}
for $(t,v)\in[t_0,T)\times (0,\infty)$, where
\begin{equation}
\label{L0}
{\cal L}^0=\left[\mu_2-r-\rho\frac{\mu_1-r}{\sigma_1}\sigma_2\right]v\frac{\partial }{\partial v}+\frac{\sigma_2^2v^2}{2}
\frac{\partial^2 }{\partial v^2}
\end{equation}
and 
\begin{equation}
\kappa(t,v)=e^{-\gamma(1-\rho^2)(v-e^{(\alpha-r)(t-t_0)}L)^+}.
\end{equation}
The boundary conditions for Problem \eqref{comp2} are
\begin{equation}
\begin{split}
F(T,v) &=e^{-\gamma (1-\rho^2)(v-e^{(\alpha-r)(T-t_0)}L)^+} \\
F(t,0) &= 1.
\end{split}
\end{equation}

Since  Problem \eqref{comp2} is independent of $X$ and $S$, we define the borrower's optimal exercise boundary as the function
\begin{equation}
\label{Vstar}
V^\ast(t)=\inf\left\{v\geq 0: F(t,v)= \kappa(t,v)\right\}
\end{equation}
and the optimal repayment time as
\begin{equation}
\label{tau2}
\tau^\ast=\inf\left\{t_0\leq t\leq T: V_t = V^\star(t)\right\}.
\end{equation}

It follows from the definition \eqref{indiff_fin} and the factorization \eqref{factor} that the indifference value for the repayment option is given by $p=p(t_0,V_{t_0})$ where
\begin{equation}
p(t,v)=-\frac{1}{\gamma(1-\rho^2)}\log F(t,v).
\end{equation}
Therefore, the original free boundary problem can be rewritten in terms of the indifference value as 
\begin{equation}
\label{comp3}
\left\{\begin{aligned}
\frac{\partial p}{\partial t}+{\cal L}^0p-\frac{1}{2}\gamma(1-\rho^2)\sigma^2_2v^2\left(\frac{\partial p}{\partial v}\right)^2&\leq 0, \\
 p(t,v)&\geq \left(v-e^{(\alpha-r)(t-t_0)}L\right)^+, \\
\left[\frac{\partial p}{\partial t}+{\cal L}^0p-\frac{1}{2}\gamma(1-\rho^2)\sigma^2_2v^2\left(\frac{\partial p}{\partial v}\right)^2\right]&\cdot(p-(v-e^{(\alpha-r)(t-t_0)}L)^+)=0,
\end{aligned}\right.
\end{equation}
Similarly, the optimal exercise time $\tau^\ast$ can be expressed in terms of $p$ as follows:
\begin{equation}
\label{tau3}
\tau^\ast=\inf\left\{t_0\leq t \leq T: p(t,V_t)=(V_t-e^{(\alpha-r)(t-t_0)}L)^+\right\}
\end{equation}

Once we find the optimal exercise boundary $V^\ast(t)$, say by solving problem \eqref{comp2} numerically, we can calculate the bank's cost of granting the repayment option as the risk--neutral value of a barier--type call option on 
$e^{r(t-t_0)}V_t$ with strike $e^{\alpha(t-t_0)}L$ and maturity $T$ exercised at the barrier $e^{r(t-t_0)}V^\ast(t)$. In other words 
\begin{align}
C_{t_{0}}=C(t_{0},v)&=E^Q\left[\left.e^{-r(\tau-t_{0})}\left(e^{r(\tau-t_{0})}V^\ast(t)-e^{\alpha(\tau-t_{0})}L\right)^{+}\mathbf{1}_{\{\tau^{*}<\infty\}}\right|V_{t_{0}}=v\right]\\
&=E^Q\left[\left.e^{-\widehat r(\tau-t_{0})}\left(e^{(r-\alpha)(\tau-t_{0})}V^\ast(t)-L\right)^{+}\mathbf{1}_{\{\tau^{*}<\infty\}}\right|V_{t_{0}}=v\right]\\
&=E^Q\left[\left.e^{-\widehat r(\tau-t_{0})}\left(\widehat V^\ast(t)-L\right)^{+}\mathbf{1}_{\{\tau^{*}<\infty\}}\right|V_{t_{0}}=v\right]
\end{align}
where $\widehat r=r-\alpha$ and $\widehat V(t)=e^{\widehat r(\tau-t_{0})}V^\ast(t)$. Denoting $\widehat V_{t}=e^{(r-\alpha)(\tau-t_{0})}V_{t}$, it is trivial to see that
$\tau^{\ast}$ defined in \eqref{tau2} can be written as
\begin{equation}
\tau^\ast=\inf\left\{t_0\leq t\leq T: V_t = V^\star(t)\right\}=\inf\left\{t_0\leq t\leq T: \widehat V_t = \widehat V^\star(t)\right\}
\end{equation}
Therefore, since the risk--neutral dynamics for the process $\widehat V_{t}$ is 
\begin{equation}
d\widehat V_t=(\widehat r-\delta) V_{t}dt + \sigma_{2}V_{t}dW_{t}^{Q},
\end{equation}
we have that the function $C(t,v)$ satisfies the Black--Scholes PDE
\begin{equation}
\label{pde}
\frac{\partial C}{\partial t}+(r-\alpha-\delta)v\frac{\partial C}{\partial v}+\frac{\sigma^2_2v^2}{2}\frac{\partial^2 C}{\partial v^2}=(r-\alpha)C
\end{equation}
over the domain ${\cal D}=\{(t,v): t_0\leq t \leq T, 0\leq v\leq V^\ast(t)\}$, subject to the boundary conditions
\begin{equation}
\label{boundary}
\begin{split}
C(t,0)&=0, \quad\qquad\qquad\qquad t_0\leq t\leq T, \\
C(t,\widehat V^\ast(t)) &= (\widehat V^\ast(t)-L)^+, \qquad t_0\leq t\leq T,\\
C(T,v) &= (v-L)^+, \qquad \qquad 0\leq v\leq \widehat V^*(T)
\end{split}
\end{equation}

As before, once we calculate the cost $C_{t_0}$, the fee to be charged for the loan is given by \eqref{relation}. It is easy to see that the cost $C_{t_0}$, and consequently the fee $c$, increase if the optimal exercise boundary $V^*(t)$ is shifted upward and decrease otherwise. 

\subsection{Properties of the loan fee}

In this section we investigate how the loan fee to be charged by the bank depends on the underlying parameters. We will always assume that the interest rate $r$, the expected return $\mu_1$ and volatility $\sigma_1$ for the market portfolio $S$, the loan interest rate $\alpha$, and loan amount $L$ are fixed. 
On the other hand, we treat the risk aversion $\gamma$, the dividend rate $\delta$, the correlation $\rho$, and the underlying asset volatility $\sigma_2$ as variable parameters. We then perform comparative statics, that is, we change each of these parameters while keeping the others constant and analyze the corresponding behavior of the loan fee.  

Observe that for each choice of values for $\delta$, $\sigma_2$ and 
$\rho$ the expected return $\mu_2$ is automatically determined by the assumption that asset prices are in equilibrium. For simplicity, we continue to assume that $S$ is the discounted price of the market portfolio, so that the CAPM condition \eqref{CAPM} holds and we have that 
\begin{equation}
\label{mu2}
\mu_2=\rho\frac{\mu_1-r}{\sigma_1}\sigma_2+r-\delta.
\end{equation}

The behavior of the loan fee with respect to the underlying parameters is established
in the next proposition, which we prove using the same technique as in \cite{LeungSircar09}, but adapted the problem at hand.

\begin{prop}
\label{proposition}
The loan fee $c$:
\begin{enumerate}
\item decreases as the risk aversion $\gamma$ increases;
\item decreases as the dividend rate $\delta$ increases;
\item increases as $\rho^{2}$ increases;
%\item decreases as the excess interest $(\alpha-r)$ increases.
\end{enumerate}
\end{prop}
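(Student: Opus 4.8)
The plan is to reduce all three comparative statics to the monotone dependence of the client's optimal exercise boundary $V^\ast(t)$ on the parameter in question, and then to invoke the monotonicity of the fee in the boundary already recorded after \eqref{boundary}: the cost $C_{t_0}$, and hence the fee $c$, rises when $V^\ast(\cdot)$ is shifted upward and falls when it is shifted downward. Since $V^\ast(t)=\inf\{v: p(t,v)=(v-e^{(\alpha-r)(t-t_0)}L)^+\}$ and the obstacle $g(t,v):=(v-e^{(\alpha-r)(t-t_0)}L)^+$ depends only on the fixed quantities $\alpha,r,L$, a pointwise increase of the indifference value $p$ shrinks the upper stopping region $\{p=g\}=\{v\ge V^\ast(t)\}$ and therefore raises $V^\ast(\cdot)$, while a pointwise decrease lowers it. Thus it suffices to track the monotonicity of $p$, which I would do through comparison arguments on the variational inequality \eqref{comp3}, exactly in the spirit of \cite{LeungSircar09}.

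For the risk aversion and the correlation I would argue directly on \eqref{comp3}. Writing the operator as $N_a[p]=\partial_t p+\mathcal L^0 p-\tfrac12 a\,\sigma_2^2 v^2(\partial_v p)^2$ with penalty coefficient $a=\gamma(1-\rho^2)\ge 0$, note that under the equilibrium relation \eqref{mu2} the drift in $\mathcal L^0$ collapses to $-\delta$, so that $\mathcal L^0$ is independent of $\rho$; hence varying either $\gamma$ or $\rho^2$ changes only $a$, while $\mathcal L^0$ and $g$ stay fixed. For two coefficients $a_1<a_2$ with solutions $p_1,p_2$, the identity $N_{a_1}[p_2]=N_{a_2}[p_2]+\tfrac12(a_2-a_1)\sigma_2^2 v^2(\partial_v p_2)^2\ge N_{a_2}[p_2]$ shows that $N_{a_1}[p_2]\ge 0$ on the continuation region of $p_2$ and $p_2\ge g$ everywhere, so $p_2$ is a subsolution of the $a_1$-problem; the comparison principle for \eqref{comp3} then gives $p_2\le p_1$, i.e. $p$ is nonincreasing in $a$. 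Increasing $\gamma$ raises $a$ and thus lowers $p$, $V^\ast(\cdot)$ and $c$ (part 1); increasing $\rho^2$ lowers $a$ and thus raises $p$, $V^\ast(\cdot)$ and $c$ (part 3). Neither argument uses the sign of $\partial_v p$, since the penalty term carries $(\partial_v p)^2\ge0$.

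The dividend rate is the delicate case, and I expect it to be the main obstacle, because $\delta$ enters twice. In \eqref{comp3} it appears through $\mathcal L^0=-\delta v\,\partial_v+\tfrac12\sigma_2^2v^2\partial_{vv}$, and for $\delta_1<\delta_2$ one has $N_{\delta_2}[p]-N_{\delta_1}[p]=-(\delta_2-\delta_1)v\,\partial_v p$; to run the same subsolution/comparison mechanism I need the spatial monotonicity $\partial_v p\ge 0$. I would establish this separately: passing to the log variable $y=\log v$ turns $V$ into a Brownian motion with constant drift, for which the operator is translation invariant while the obstacle $(e^{y}-e^{(\alpha-r)(t-t_0)}L)^+$ and the boundary data remain nondecreasing in $y$, so comparing $p$ with its spatial translate yields $\partial_y p\ge0$, equivalently $\partial_v p\ge0$. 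Granting this, increasing $\delta$ lowers $p$ and hence $V^\ast(\cdot)$. I must then account for the second appearance of $\delta$, in the bank's pricing equation \eqref{pde}, where raising $\delta$ lowers the $Q$-drift $(r-\alpha-\delta)$ of $\widehat V$; with the boundary held fixed and the payoff $(\widehat V^\ast-L)^+\ge0$ nondecreasing in $v$, a Feynman--Kac / comparison argument on \eqref{pde} (again using $\partial_v C\ge0$, proved as above) shows $C_{t_0}$ decreases. Both channels push $c$ downward, giving part 2. The genuinely technical points --- verifying the spatial monotonicities $\partial_v p\ge0$ and $\partial_v C\ge0$, and justifying the comparison principle for the degenerate, quadratically nonlinear inequality \eqref{comp3} in the viscosity sense --- are where the real work lies; everything else is bookkeeping of signs.
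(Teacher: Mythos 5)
Your proposal is correct and its core is the same as the paper's: reduce all three statements to monotonicity of the indifference value $p$ via the comparison principle for the variational inequality \eqref{var}, in the spirit of Leung--Sircar, then pass from $p$ to the exercise boundary and from the boundary to the fee. Within that common frame you differ in three places, two of which are genuine improvements. First, you unify items (1) and (3) through the single coefficient $a=\gamma(1-\rho^2)$, observing that under \eqref{mu2} the operator ${\cal L}^0$ is $\rho$-free; the paper treats the two items separately, but the mechanism is identical. Second, for the spatial monotonicity $\partial_v p\geq 0$ you propose a translation argument in log coordinates; the paper gets this more cheaply by noting that $u$ in \eqref{optimal} is increasing in $v$ (the payoff is increasing in $v$ and the geometric Brownian motion $V$ is pathwise monotone in its starting value), so you could simplify there. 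Third, and most substantively, you notice that $\delta$ enters \emph{twice} --- once in the variational inequality through ${\cal L}^0$, and again in the bank's pricing PDE \eqref{pde} through the risk--neutral drift $(r-\alpha-\delta)$ --- and you supply the comparison argument for the second channel (fixed boundary, lower drift, nondecreasing payoff $\Rightarrow$ lower cost $C$). The paper's proof tracks only the first channel and then invokes the remark after \eqref{boundary}, which compares costs across boundaries for \emph{fixed} PDE coefficients; strictly speaking this leaves a gap in item (2) that your two-step decomposition (change $\delta$ at fixed boundary, then lower the boundary at fixed $\delta$) closes. No such issue arises for items (1) and (3), since $\gamma$ and $\rho$ do not appear in \eqref{pde} at all, which is why the paper's shorter argument is complete for those two cases.
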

\begin{proof}
Observe first that for fixed values of $L,\alpha$ and $r$, it follows from \eqref{tau3} that a smaller indifference value leads to a smaller optimal exercise time, which in turns implies a lower optimal exercise boundary and consequently a smaller loan fee.  To establish how the indifference value changes with the underlying parameters, we use the comparison principle for the variational inequality 
\begin{equation}
\label{var}
\min\left\{-\frac{\partial p}{\partial t}-{\cal L}^0p+\frac{1}{2}\gamma(1-\rho^2)\sigma^2_2v^2\left(\frac{\partial p}{\partial v}\right)^2,p(t,v)- \left(v-e^{(\alpha-r)(t-t_0)}L\right)^+\right\}=0
\end{equation}
which is known to be equivalent to \eqref{comp3}.

For item (1), observe that the variational inequality \eqref{var} depends on $\gamma$ only through the nonlinear term 
\begin{equation}
\label{nonlinear}
\frac{1}{2}\gamma(1-\rho^2)\sigma^2_2v^2\left(\frac{\partial p}{\partial v}\right)^2.
\end{equation}
Since this is increasing in $\gamma$, it follows that $p$ is decreasing in $\gamma$. 

For item (2), observe first that $\frac{\partial p}{\partial v}\geq 0$, because $u(,t,x,v)$ defined in \eqref{optimal} (and consequently $p(t,v)$) is an increasing function $v$. Next, recalling the definition of ${\cal L}^{0}$ in \eqref{L0}, we see that the variational inequality \eqref{var} depends on $\delta$ through the term
\[-\left[\mu_2-r-\rho\frac{\mu_1-r}{\sigma_1}\sigma_2\right]\frac{\partial p}{\partial v}=\delta \frac{\partial p}{\partial v},\]
on account of \eqref{mu2}. Since this is increasing is $\delta$, we have that $p$ is decreasing in $\delta$.

Similarly for item (3), using \eqref{L0} we see that the variational inequality \eqref{var} depends on $\rho$ through the term 
\[-\left[\mu_2-r-\rho\frac{\mu_1-r}{\sigma_1}\sigma_2\right]\frac{\partial p}{\partial v}+\frac{1}{2}\gamma(1-\rho^2)\sigma^2_2v^2\left(\frac{\partial p}{\partial v}\right)^2.\]
By virtue of \eqref{mu2}, we then see that the dependence on $\rho$ reduces to the nonlinear term \eqref{nonlinear}. Therefore the indifference price is a symmetric function of $\rho$, and increases as $\rho^2$ increases from $0$ to $1$.
%\item The variational inequality \eqref{var} depends on $(\alpha-r)$ only through the term 
%\[- \left(v-e^{(\alpha-r)(t-t_0)}L\right)^+.\]
%Since this is increasing in $(\alpha-r)$, it follows that $p$ is decreasing in $(\alpha-r)$.
\end{proof}

Notice that the variational inequality \eqref{var} depends on $\sigma_2$ through the term 
\begin{equation}
-\frac{\sigma_2^2v^2}{2}\frac{\partial^2p }{\partial v^2}+\frac{1}{2}\gamma(1-\rho^2)\sigma^2_2v^2\left(\frac{\partial p}{\partial v}\right)^2.
\end{equation}
Since this is not necessarily monotone in $\sigma_2$, we cannot expect the indifference value, and consequently the loan fee $c$, to be monotone function of the underlying stock volatility.

Regarding the behavior of the loan fee with respect to the maturity length of the loan, one intuitively expects that a longer maturity increases the optionality of the repayment and should contribute to higher fee. As establish in the next proposition, this is indeed the case provided we can ignore the effects of interest rates.  

\begin{prop} If $\alpha=r$, the loan fee is an increasing function of the maturity $T$. 
\label{time}
\end{prop}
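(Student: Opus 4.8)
The plan is to reduce the statement to a monotonicity property of the borrower's optimal exercise boundary. Since $L$ and $V_{t_0}$ are fixed, relation \eqref{relation} shows that the fee $c$ is increasing in $T$ if and only if the bank's cost $C_{t_0}$ is, and it was already observed after \eqref{boundary} that $C_{t_0}$ increases whenever the exercise boundary $V^\ast(\cdot)$ defined in \eqref{Vstar} is shifted upward. Hence it suffices to show that enlarging the horizon raises this boundary. The key simplification from the hypothesis $\alpha=r$ is that the obstacle in \eqref{var} collapses to $(v-L)^+$, while the operator ${\cal L}^{0}$ and the quadratic term carry no explicit time dependence, so the variational inequality becomes autonomous. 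Consequently its solution depends on $t$ and $T$ only through the remaining time $s=T-t$: writing $p(t,v)=\psi(T-t,v)$, the function $\psi$ solves, in the forward variable $s$, the autonomous obstacle problem obtained from \eqref{var} by replacing $-\partial_t p$ with $\partial_s\psi$, together with the initial condition $\psi(0,v)=(v-L)^+$. The exercise boundary then reads $V^\ast(t)=\Phi(T-t)$ with $\Phi(s)=\inf\{v\ge 0:\psi(s,v)=(v-L)^+\}$.

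Next I would prove that $\psi(s,v)$ is nondecreasing in $s$, using two ingredients: first, every solution of the obstacle problem dominates its obstacle, so $\psi(s,\cdot)\ge(v-L)^+$ for all $s\ge0$; second, the comparison principle for the nonlinear variational inequality \eqref{var}, which is the same tool invoked in the proof of Proposition \ref{proposition}. I would fix $0\le s_1<s_2$ and set $\tilde\psi(s,v):=\psi(s+(s_2-s_1),v)$ for $s\in[0,s_1]$. Because the problem is autonomous, $\tilde\psi$ solves exactly the same variational inequality as $\psi$ on $[0,s_1]$, but with initial data $\tilde\psi(0,v)=\psi(s_2-s_1,v)\ge(v-L)^+=\psi(0,v)$. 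Order preservation with respect to initial data, a direct consequence of the comparison principle, then yields $\tilde\psi(s,v)\ge\psi(s,v)$ throughout $[0,s_1]$, and evaluating at $s=s_1$ gives $\psi(s_2,v)\ge\psi(s_1,v)$. As $s_1<s_2$ were arbitrary, $\psi(\cdot,v)$ is nondecreasing; equivalently, the indifference value is increasing in the time to maturity.

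It remains to translate this into the behaviour of the boundary. If $v$ belongs to the exercise region at remaining time $s_2$, that is $\psi(s_2,v)=(v-L)^+$, then the chain $(v-L)^+\le\psi(s_1,v)\le\psi(s_2,v)=(v-L)^+$ forces $\psi(s_1,v)=(v-L)^+$, so $v$ lies in the exercise region at $s_1$ as well. Thus the exercise region shrinks as $s$ grows, and since it is of call type (exercise occurring for large $v$) its lower edge $\Phi(s)$ is nondecreasing in $s$. Returning to calendar time, for horizons $T_1<T_2$ one has $V^\ast(t)=\Phi(T_i-t)$, so $\Phi(T_2-t)\ge\Phi(T_1-t)$ at every common $t\in[t_0,T_1]$: the longer-dated boundary dominates the shorter-dated one. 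By the monotonicity of $C_{t_0}$ in the boundary, the cost and hence the fee increase with $T$. I expect the main obstacle to be the comparison principle itself: inequality \eqref{var} is a genuinely nonlinear Hamilton--Jacobi--Bellman obstacle problem with a quadratic gradient term that degenerates at $v=0$, so justifying order preservation in the viscosity sense, rather than merely invoking it, requires a careful construction of sub- and supersolutions and control of the behaviour at the spatial boundaries, exactly as in \cite{LeungSircar09}.
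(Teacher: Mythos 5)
Your proof is correct and reaches the conclusion by the same overall skeleton as the paper --- reduce fee monotonicity to an upward shift of the exercise boundary, reduce that to monotonicity of the value function in time-to-maturity, and pass from value to boundary via the squeeze between obstacle and solution --- but the key monotonicity step is established by a genuinely different tool. The paper works with the \emph{linear} obstacle problem \eqref{comp2} for $F$ and invokes its probabilistic representation from \cite{ObermanZari03}, namely $F(t,v)=\inf_{\tau\in{\cal T}[t,T]}E^0[\kappa(\tau,V_\tau)\,|\,V_t=v]$ under the minimal martingale measure; with $\alpha=r$ the obstacle is time-independent and $V$ is time-homogeneous, so monotonicity of $F$ in calendar time is immediate from the inclusion ${\cal T}[t_0,T-t+t_0]\subset{\cal T}[t_0,T-s+t_0]$ of stopping-time sets --- no comparison principle is needed at all. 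You instead exploit autonomy of the nonlinear variational inequality \eqref{var} for $p$, perform a time shift, and invoke order preservation with respect to initial data; this keeps the argument purely within the PDE framework and is consistent with the tools the paper already uses in Proposition \ref{proposition}, but, as you candidly note, it rests on a comparison principle for a quasilinear obstacle problem with a quadratic, degenerate gradient term, which is a nontrivial analytic input. You could remove this weak point at no cost by running your exact time-shift argument on \eqref{comp2} instead of \eqref{var}: since $p=-\tfrac{1}{\gamma(1-\rho^2)}\log F$ is a monotone transformation, monotonicity of $F$ (in reversed order, since $F\le\kappa$ while $p$ dominates its obstacle) is equivalent to monotonicity of $p$, and for the linear obstacle problem order preservation with respect to terminal/initial data is classical. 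In short: the paper's route buys elementarity (set inclusion replaces comparison), yours buys uniformity of method across the section, and the linear-problem variant would give you both.
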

\begin{proof} The solution to problem \eqref{comp2} admits a probabilistic representation (see \cite{ObermanZari03}) of the form 
\[F(t,v)=\inf_{\tau\in{\cal T}[t,T]} E^0[\kappa(\tau,V_\tau)|V_t=v],\]
where $E^0[\cdot]$ denotes the expectation operator under the {\em minimal martingale measure} $Q^0$ defined by 
\begin{equation}
\frac{dQ^0}{dP}=e^{-\frac{\mu_1-r}{\sigma_1}W_T-\frac{1}{2}\frac{(\mu_1-r)^2}{\sigma_1^2}T}.
\end{equation}   
When $\alpha=r$, we can use the fact that $V_t$ is a time--homogeneous diffusion to obtain that
\begin{align*}
F(t,v)&=\inf_{\tau\in{\cal T}[t,T]} E^0[e^{-\gamma(1-\rho^2)(V_\tau-L)^+}|V_t=v] \\
&= \inf_{\tau\in{\cal T}_{t_0,T-t+t_0}} E^0[e^{-\gamma(1-\rho^2)(V_\tau-L)^+}|V_{t_0}=v].
\end{align*}
For any $s\leq t$ we have that ${\cal T}[t_0,T-t+t_0]\subset {\cal T}[t_0,T-s+t_0]$, so $F(s,v)\leq F(t,v)$. Now fix $v>0$ and suppose that it is optimal to exercise at $(s,v)$, that is, $F(s,v)=k(s,v)$. Using the fact that $F$ is increasing in time (as we just established), we have that  
\[e^{-\gamma(1-\rho^2)(v-L)^+}=k(s,v)=F(s,v)\leq F(t,v)\leq k(t,v)=e^{-\gamma(1-\rho^2)(v-L)^+},\]
so that $F(t,v)=k(t,v)$, which implies that it is also optimal to exercise at $(t,v)$. This means that, for each fixed $T$, the optimal exercise boundary $V^{*}(t)$ is a decreasing function of time, and consequently an increasing function of the time--to--maturity parameter $(T-t)$. Therefore, as we modify the problem by increasing the maturity $T$, the optimal exercise boundary shits upwards, leading to a higher cost for the bank and a higher loan fee.  
\end{proof}

\section{Numerical Results}

\subsection{Infinite Maturity}

The only numerical step involved in this case consists of finding the value of the threshold $\thre$ by solving the nonlinear equation \eqref{shold} for given parameter values. We can then use expression \eqref{fee} to find the loan fee $c$. For comparison, we calculate the corresponding loan fee in a complete market scenario using the formulas found in \cite{XiaZhou07}. Observe that we always need to use $r=\alpha$ as explained in Section \ref{infinite} in order to maintain time-homogeneity.

We start by calculating the value of the loan fee $c$ for a range of loan amounts $L$ and four different sets of model parameters. The results are summarized in Table \ref{tableInfMat} and correspond to the following cases:

\begin{enumerate}
\item Complete market with $\sigma_{2}=0.15, \delta=0, r=0.05,\alpha=0.05$ and $V_{t_{0}}=100$. This corresponds to case (a) of Theorem 3.1 in 
\cite{XiaZhou07}, that is, $\delta=0$ and $\alpha-r=0<\sigma^{2}_{2}/2$. Because the stock pays no dividend and the excess interest rate on the loan is small, it follows that the option to repay the loan has the same value as the stock itself (that is $C_{t_{0}}=V_{t_{0}}$), which leads to $c=L$. In other words, the bank has no incentive to provide the loan and charges a fee exactly equal to the loan amount. In effect, the client gives away the stock and receives a perpetual American option with an infinite exercise threshold. 
\item Incomplete market with $\sigma_{2}=0.15, \delta=0, r=0.05,\alpha=0.05, V_{t_{0}}=100,  \rho=0.9$ and $\gamma=0.01$. This is the incomplete market analogue of the previous case. We see that incompleteness  and risk aversion  lead to a finite exercise threshold $V^{*}$ even when the stock pays no dividend and the excess interest rate on the loan is small. In other words, the option to repay the loan is exercised sooner, and consequently has a smaller value for the client, than in the complete market case. As a consequence, the bank has a smaller cost for providing the loan and can charge a reduced fee $c<L$.
\item Complete market with $\sigma_{2}=0.15, \delta=0.05, r=0.05,\alpha=0.05$ and $V_{t_{0}}=100$. This corresponds to case (b) of Theorem 3.1 in 
\cite{XiaZhou07}, since $\delta>0$. To calculate $c$, we first find the exercise threshold $a_{0}$ as in page 314 of \cite{XiaZhou07} for each value of $L$. If $a_{0}\leq V_{t_{0}}=100$ (which happens for low enough $L$) then $c=0$, meaning that the client receives $L$ in exchange of $V_{t_{0}}$ at no cost, and then immediately exercises the option to repay. In other words, there is no incentive for the client to seek the loan. On the other hand, if $a_{0}>V_{t_{0}}$, we calculate the fee $c$ using the formula at the end of page 316 of \cite{XiaZhou07}.
\item Incomplete market with  $\sigma_{2}=0.15, \delta=0.05, r=0.05,\alpha=0.05, V_{t_{0}}=100, \rho=0.9$ and $\gamma=0.01$. This is the incomplete market analogue of the previous case. As expected, we have that $V^{*}<a_{0}$, leading to a smaller fee $c$ charged by the bank.  
\end{enumerate}

\begin{table}
\caption{Loan fee  $c$ as for different loan amounts $L$ (infinite maturity)}
{\begin{tabular}{|c|c|c|c|c|c|c|c|c|c|}\hline
\multicolumn{2}{|c|}{$L$}&  50 & 60 &    70   &   80    & 90   & 100 & 110 & 120 \\ 
\hline
\multirow{1}{*}{Case 1} & c  & 50 & 60 &    70   &   80    & 90   & 100 & 110 & 120  \\ \hline
\multirow{2}{*}{Case 2} & c  & 31.0528 & 39.5086 &  48.1242 &  56.8653 & 65.7084 &  74.6363 & 83.6361 & 92.6978 \\ 
& $V^{*}$  & 263.8914 & 292.8058 &319.9876   &345.8010  &370.4988  &394.2648   &417.2377  &439.5251 \\ 
\hline
\multirow{2}{*}{Case 3}& $c$  & 0.0000  & 0.0000 & 0.0000  &0.0000   &1.9041  &7.4530   &14.8794  & 23.3145 \\ 
& $a_{0}$  & 61.2500  &73.5000  &85.7500   &98.0000   &110.2500  &122.5000   & 134.7500 & 147.0000  \\ 
\hline
\multirow{2}{*}{Case 4}& c  & 0.0000 & 0.0000 & 0.0000 & 0.0000 & 1.9015 & 7.4510   & 14.8778  &23.3132\\
& $V^{*}$  & 61.1055 & 73.2926 &85.4688  & 97.6341 & 109.7885  & 121.9323  &134.0656  & 146.1884  \\
\hline
\end{tabular}}
\label{tableInfMat}
\end{table}

Next in Figure \ref{inFiMatFig} we illustrate the dependence of the loan fee upon the other model parameters. In particular, each curve on the top left plot represents $c$ as a function of $\gamma$ for a particular value of the correlation $\rho$. Similarly, each curve on the top right plot the loan fee value $c$ as a function of $\rho$ for a particular value of the risk aversion $\gamma$. Finally the curve on the bottom plot represents the loan fee value $c$ as a function of the dividend rate $\delta$. When not explicitly shown in the figure, the values of the remaining parameters are $\sigma_{2}$=0.15, $\delta$=0.05, $r=\alpha=0.05$, $L$=90, $V_{0}=100$, $\rho=0.9$ and $\gamma=0.01$.

\begin{figure}
\begin{center} 
\subfloat{\includegraphics[scale=0.5]{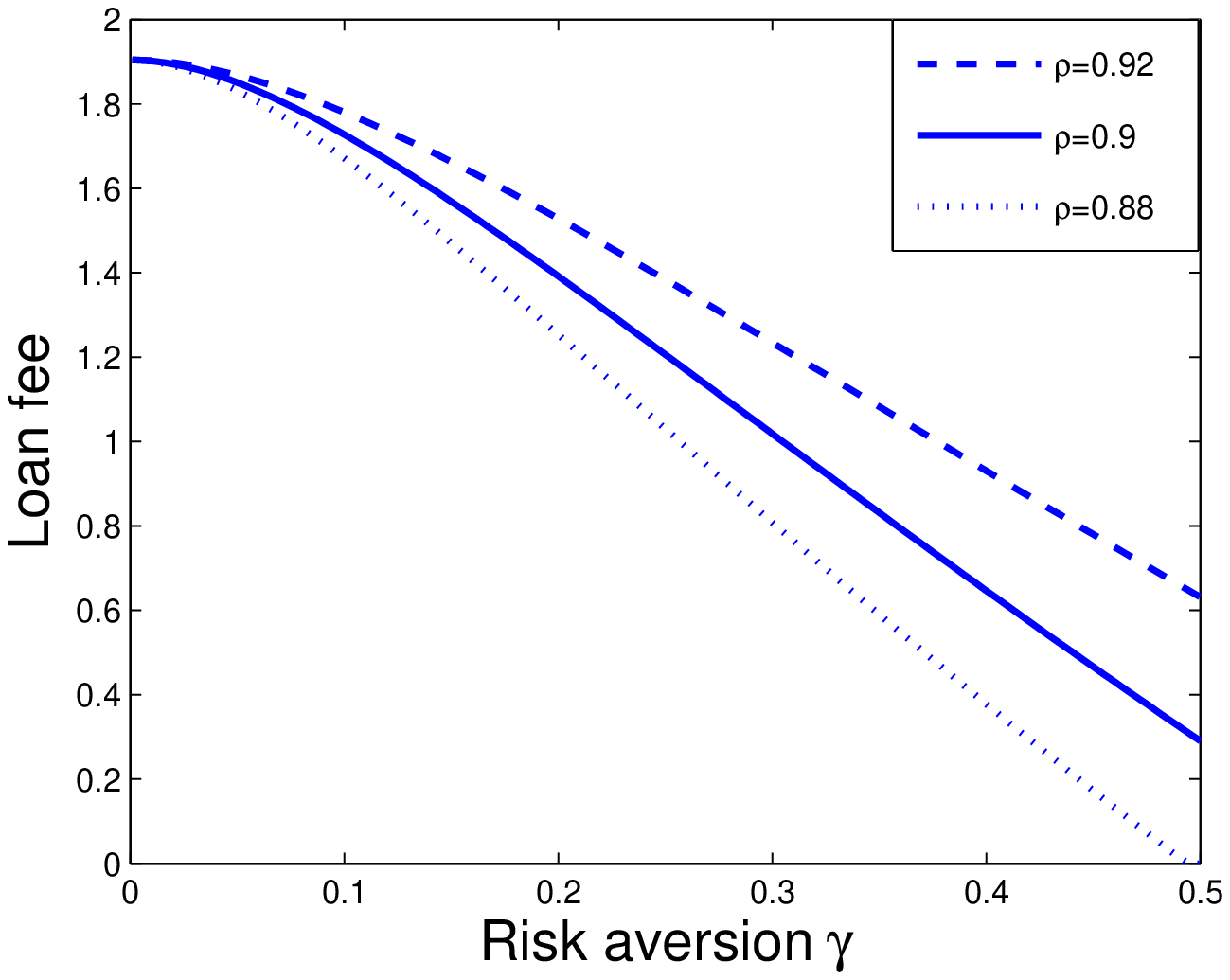}}
\subfloat{\includegraphics[scale=0.5]{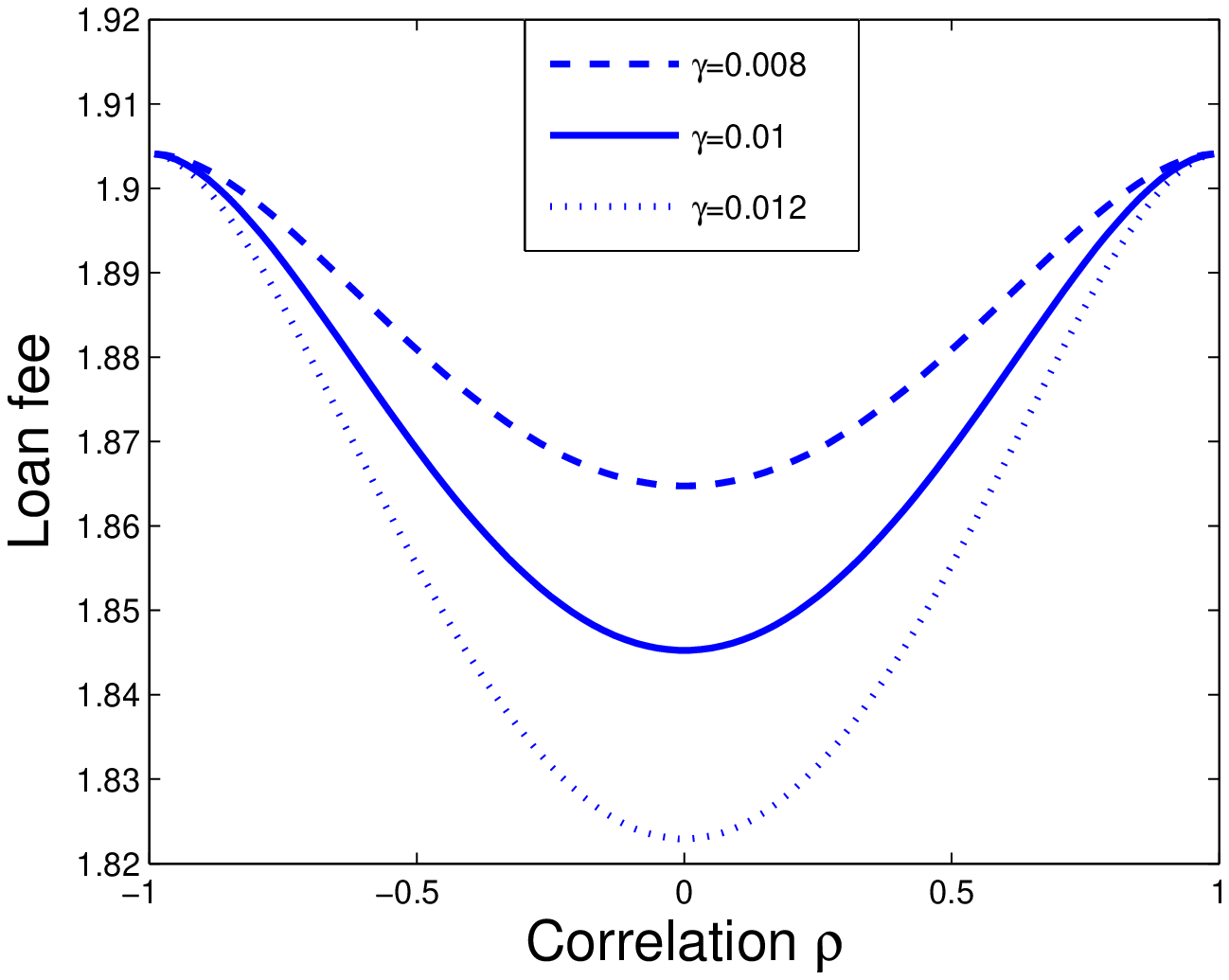}}\\
\subfloat{\includegraphics[scale=0.5]{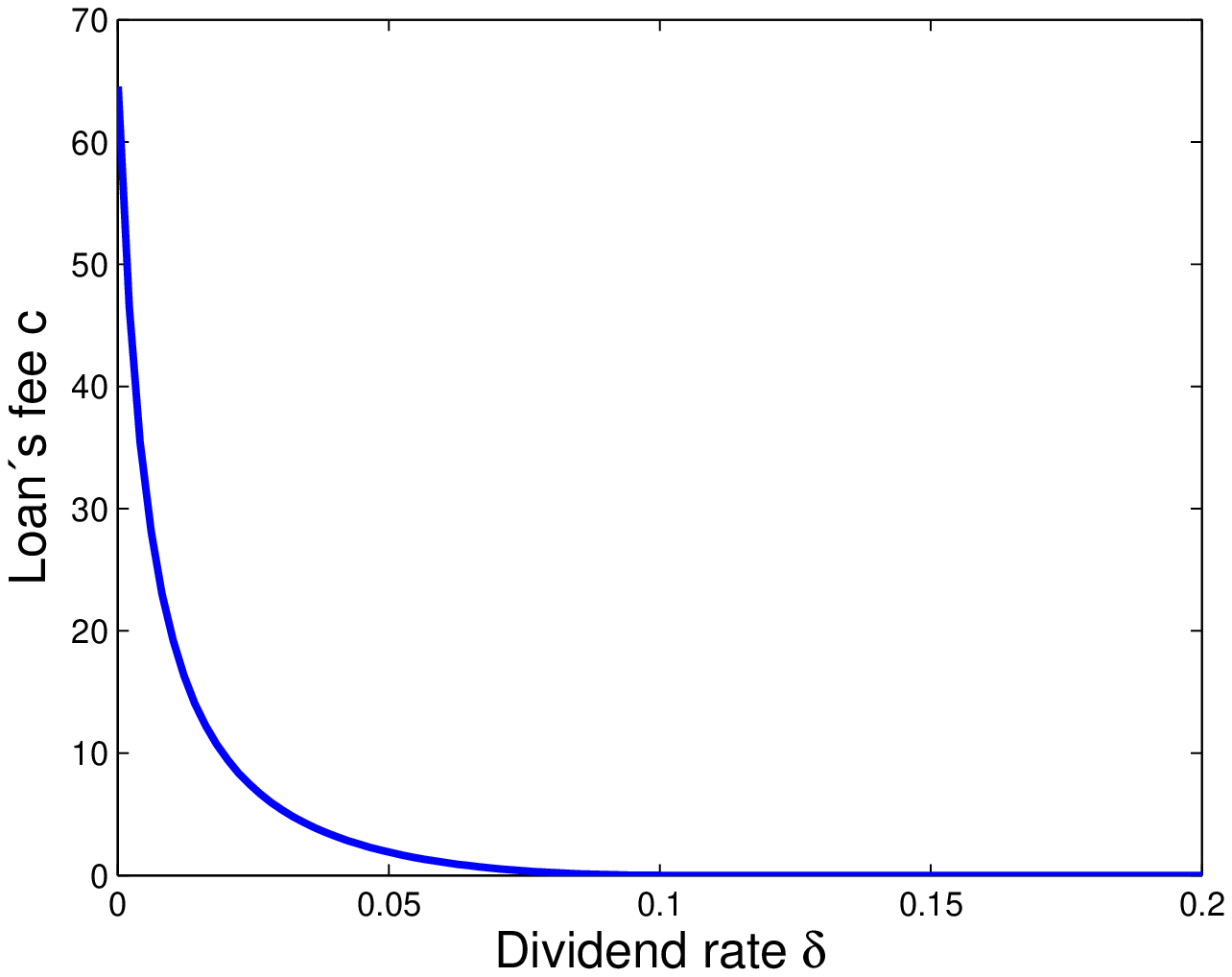}}
\caption{Dependence on model parameters for infinite maturity}
\label{inFiMatFig}
\end{center}
\end{figure}

Observe that the figures confirm the dependences established in Proposition \ref{dep-inf}, namely that the loan fee $c$ is decreasing in $\gamma$ and $\delta$ and increasing in $\rho^{2}$. Moreover, the limits as $\gamma\to 0$ and $\rho\to\pm 1$ coincide with the complete market, risk neutral value for the fee obtained in Case 3 of Table \ref{tableInfMat} for a loan amount $L=90$, namely $c=1.9041$. 

Observe further that in the incomplete market case, the fee increases sharply as $\delta\to 0$, but converges to the value $c=65.7048<90$ as obtained in Case 2 of Table \ref{tableInfMat} for $L=90$.

\subsection{Finite maturity}

As mentioned in Section \ref{finite}, the numerical procedures in this case are slightly more involved. First we use finite differences with projected successive--over--relaxation (PSOR) to solve the linear free boundary problem \eqref{comp2}. This yields a threshold function $\thre (t)$, which we then use to solve equation \eqref{pde} subject to the boundary conditions \eqref{boundary}, again by finite differences. 

To start with, Table \ref{tableFinMat} shows the loan fee $c$ for different loan amounts $L$, with the following parameter values: 
 $\sigma_{2}=0.4, \rho=0.4, \gamma=0.01, \delta=0.05, r=0.05, \alpha=0.07, V_{t_{0}}=100$ and $T=5$ (in years). Observe that we do not need to restrict ourselves to the case $r=\alpha$ as we did before, since the time--homogeneity property is not used in the finite--maturity case.
 
\begin{table}
\caption{Loan fee $c$ for different loan amounts $L$ (finite maturity)}
\begin{center}
{\begin{tabular}{|c|c|c|c|c|c|c|c|c|}\hline
$L$  & 50 & 60 &    70   &   80    & 90   & 100 & 110& 120\\ \hline
$c$   & 0.0000  &0.0000   &0.0000  &1.0667 & 4.1073 &9.3487 &16.0344   &23.8156 \\ \hline
\end{tabular}}
\end{center}
\label{tableFinMat}
\end{table}

Next in Figure \ref{finite_fig}, we illustrate in detail the dependence upon the model parameters analyzed in Propositions \ref{proposition} and \ref{time}. We use 
$T=5$, $L=80$, $\sigma_{2}=0.4$, $r=0.05$, $\alpha=0.07$, $\delta=0.05$, and $\rho=0.4$ unless otherwise specified. Each curve on the left side represents an optimal exercise boundary $\thre (t)$ for $V_0=100$ , whereas each curve on the right side represent the loan fee $c$ as function of $V_{0}$, for the particular set of parameter values described below:
\begin{enumerate}
\item For the top row, we use $\gamma=0.01,0.05,0.08$ and find that both the optimal exercise boundary and the loan fee decrease as risk aversion increases, in agreement with item 1 of Proposition \ref{proposition}.
\item For the second row, we use $\delta=0.05,0.1,0.15$ and find that both the optimal exercise boundary and the loan fee decrease as the dividend rate increases, in agreement with item 2 of Proposition \ref{proposition}.
\item For the third row, we use $\rho=0.05,0.4,0.9$ and find that both the optimal exercise boundary and the loan fee increase as correlation increases, in agreement with item 3 of Proposition \ref{proposition}.
\item For the bottom, row we use $\alpha=r=0.05$ and find that the optimal exercise boundary is strictly decreasing with respect to time-to-maturity $(T-t)$, in agreement with Proposition \ref{time}
\end{enumerate}

\begin{figure}
\begin{center}
\subfloat{\includegraphics[scale=0.45]{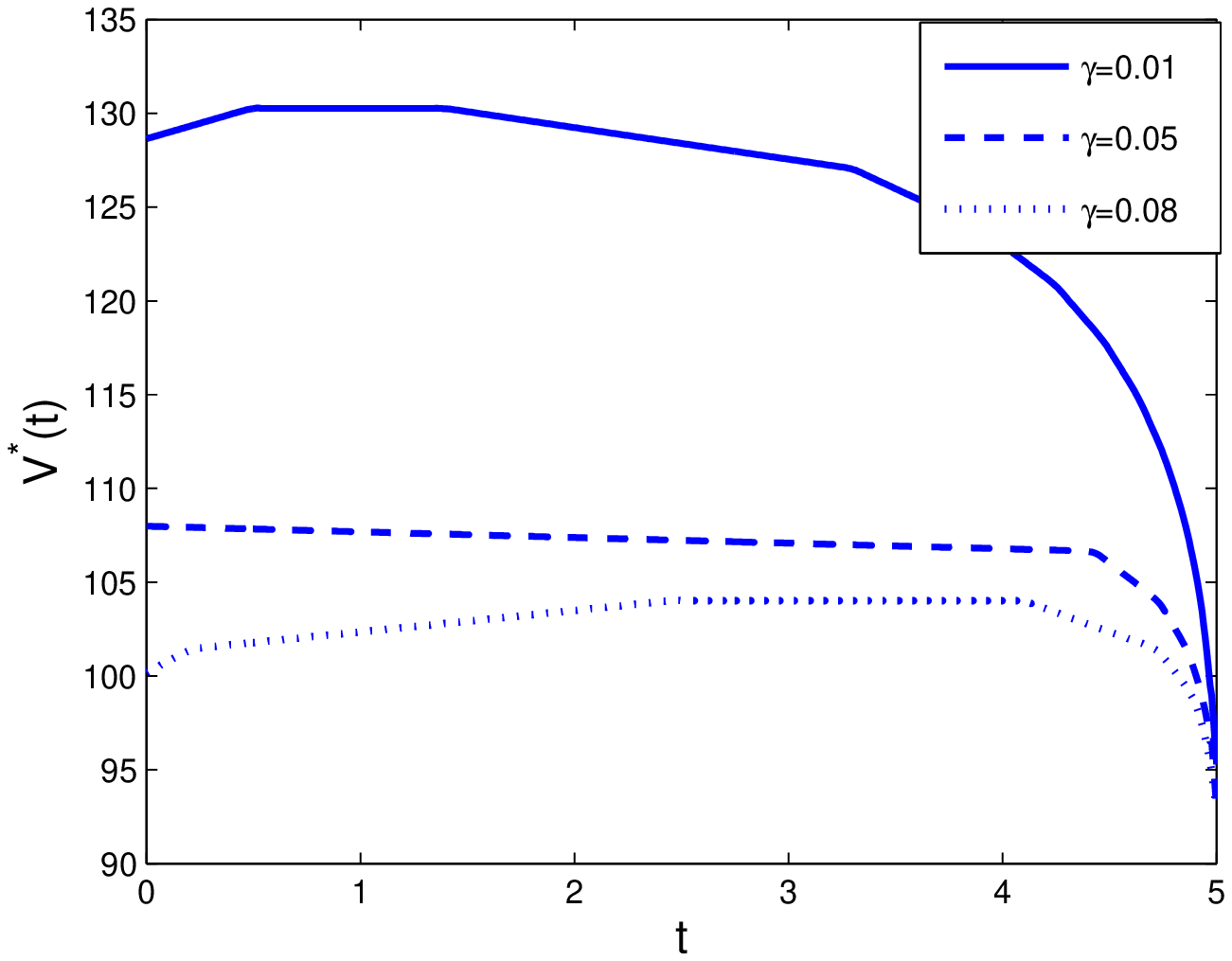}}
\subfloat{\includegraphics[scale=0.45]{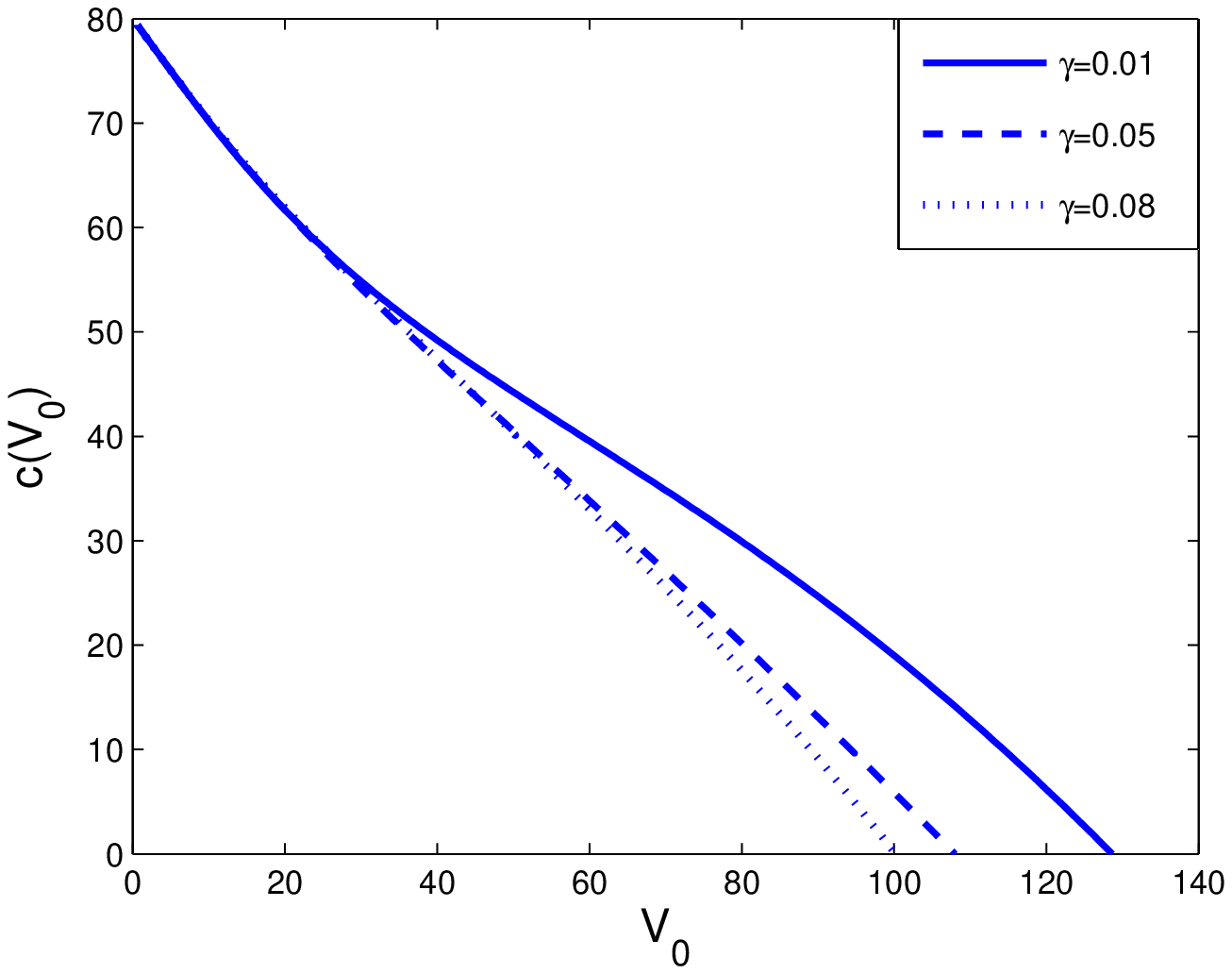}}\\
\subfloat{\includegraphics[scale=0.45]{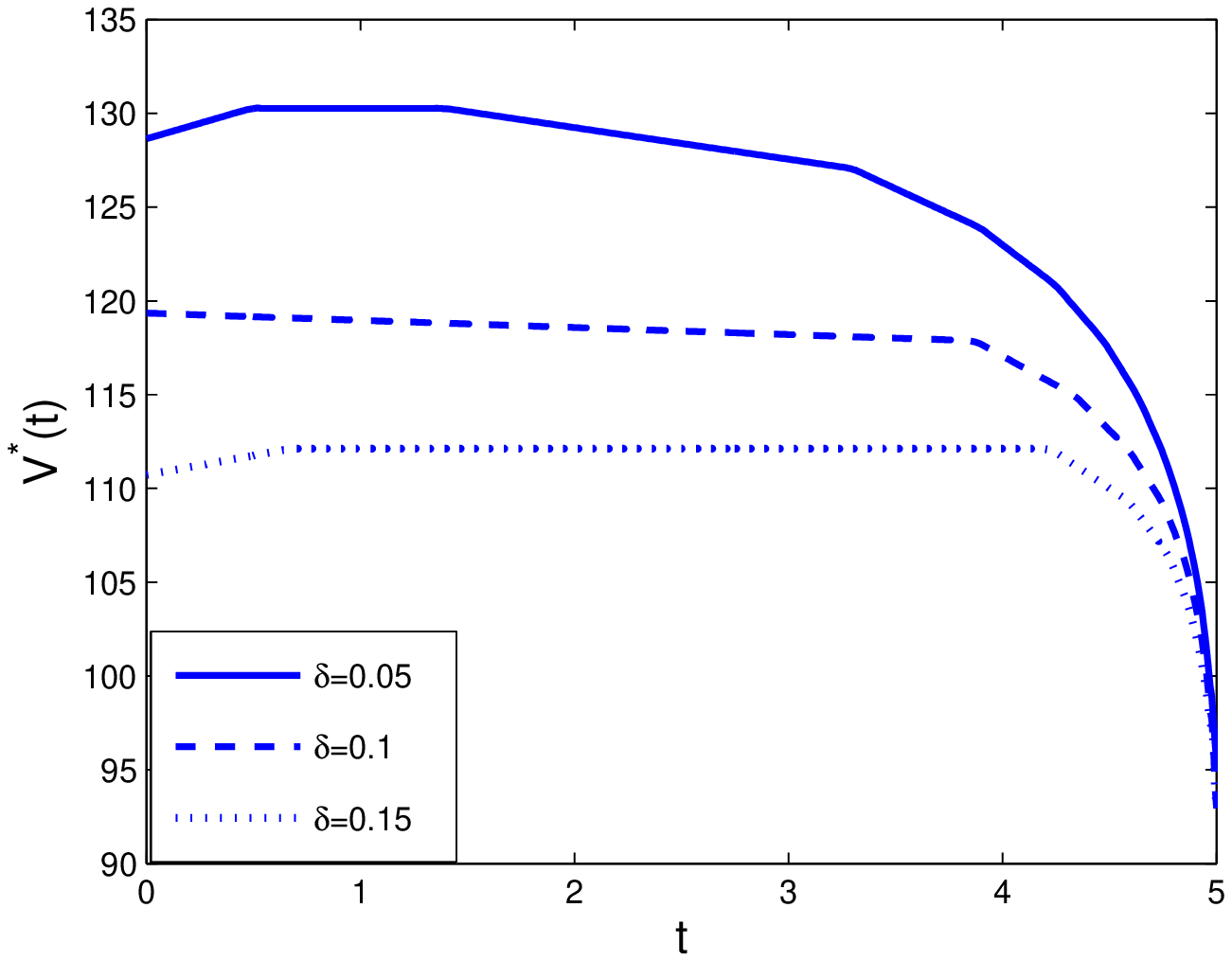}}
\subfloat{\includegraphics[scale=0.45]{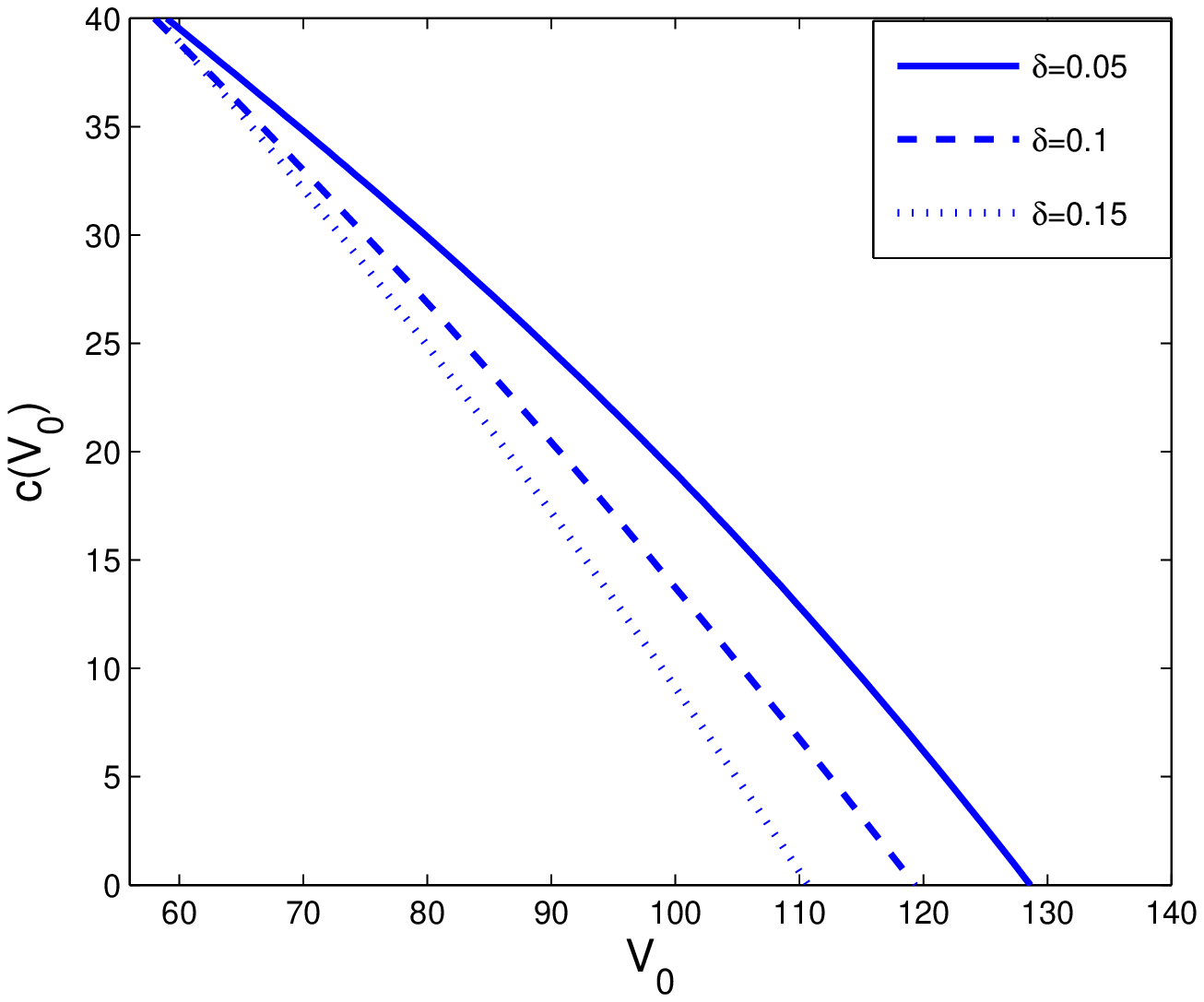}}\\
\subfloat{\includegraphics[scale=0.45]{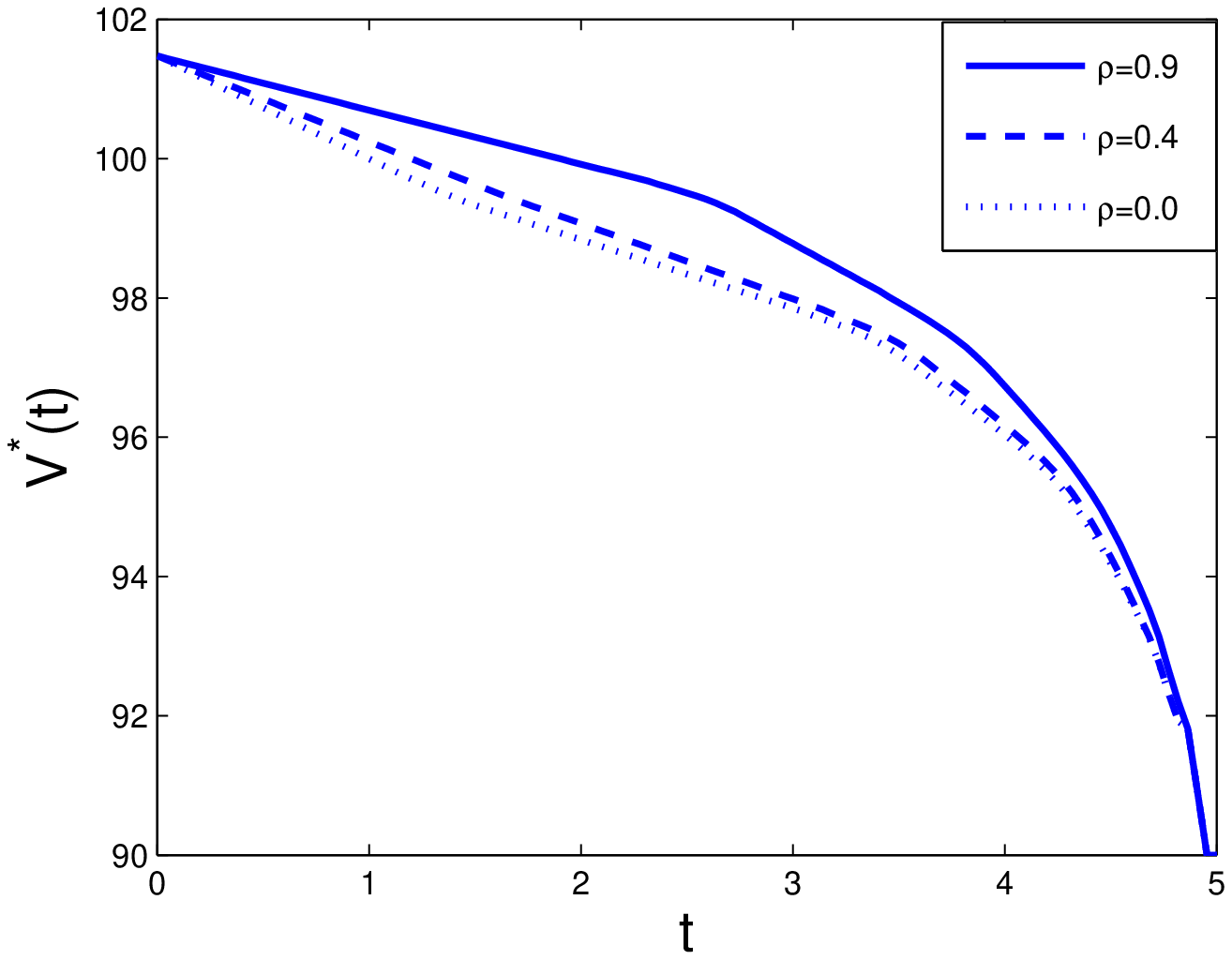}}
\subfloat{\includegraphics[scale=0.45]{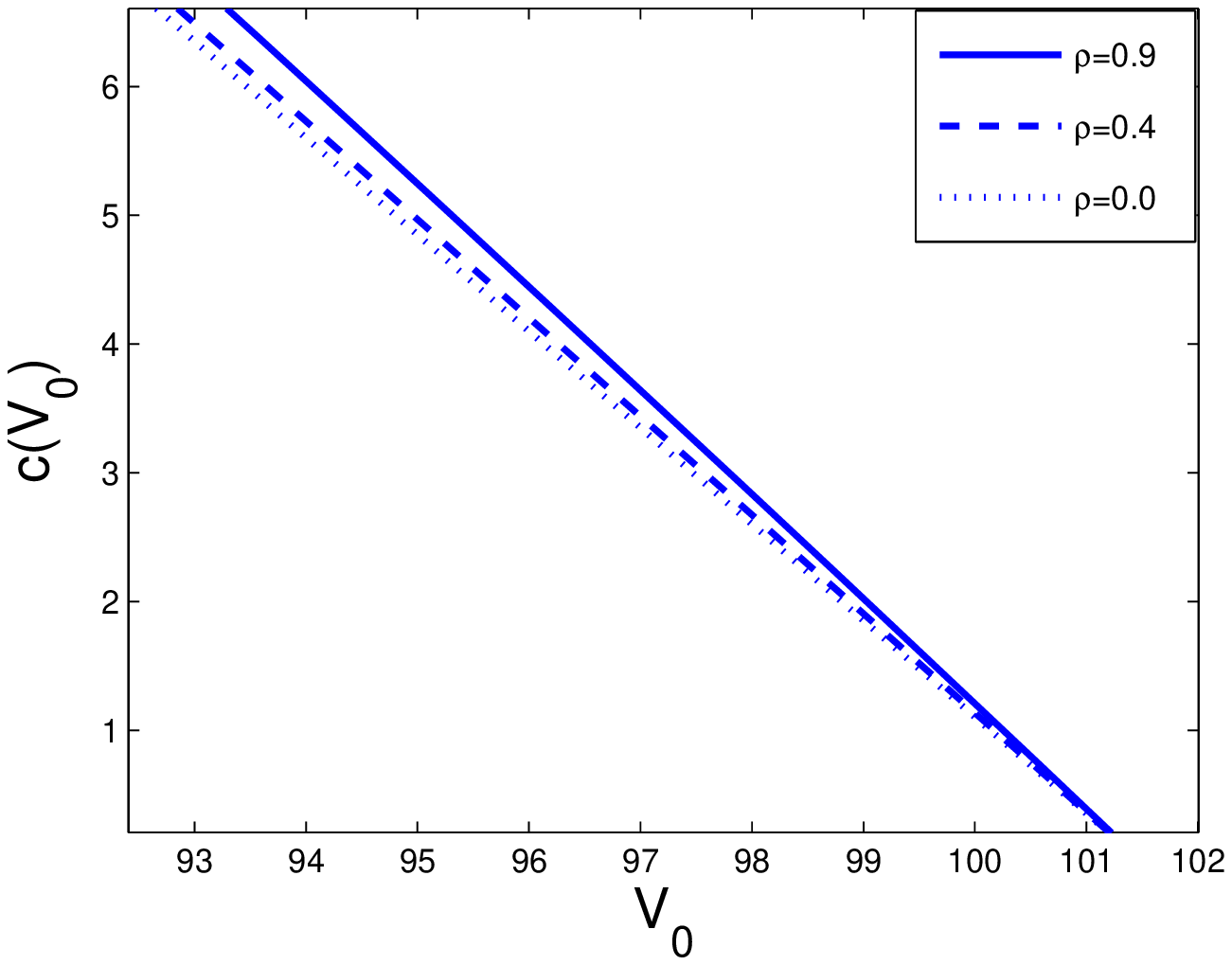}}\\
\subfloat{\includegraphics[scale=0.45]{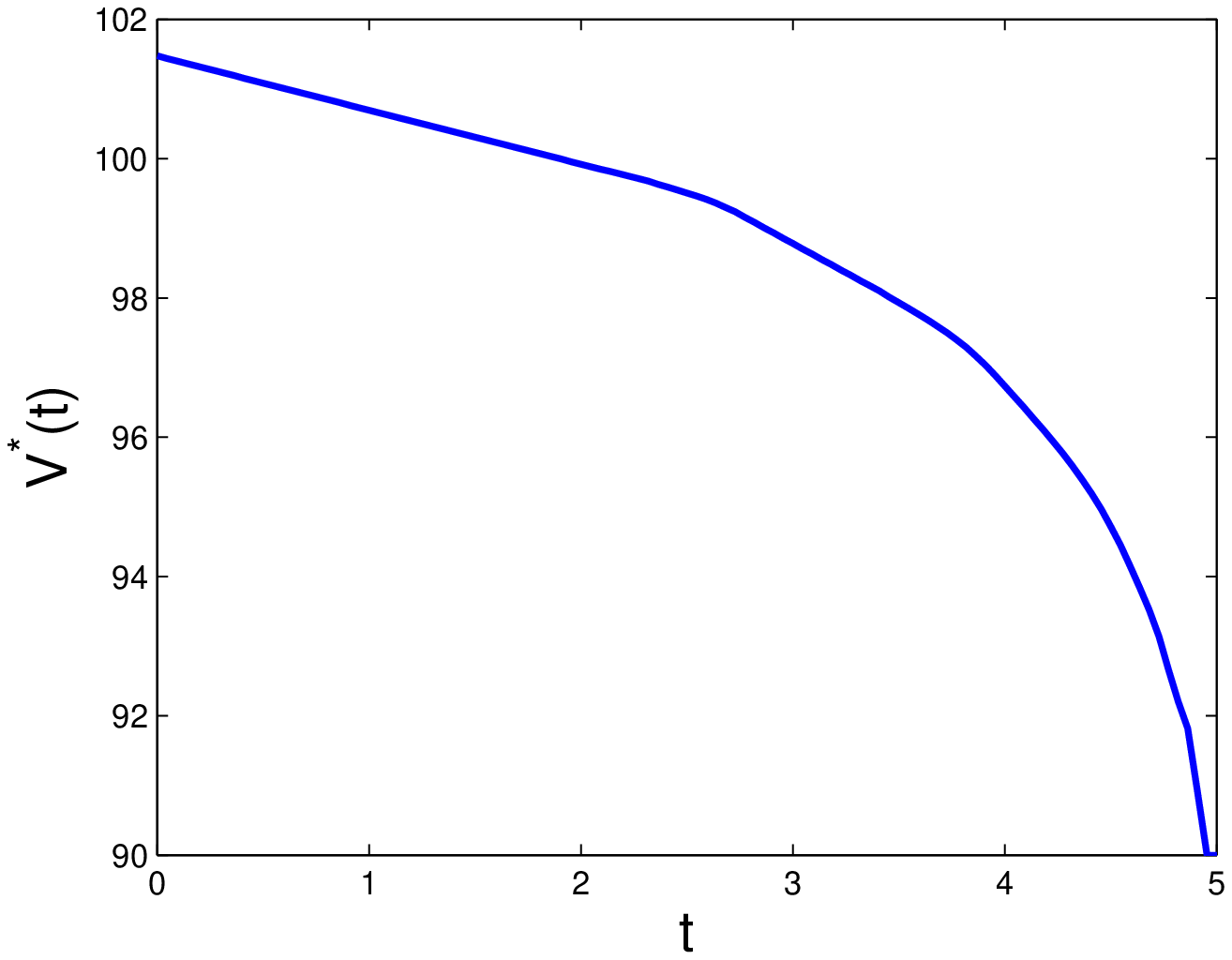}}
\subfloat{\includegraphics[scale=0.45]{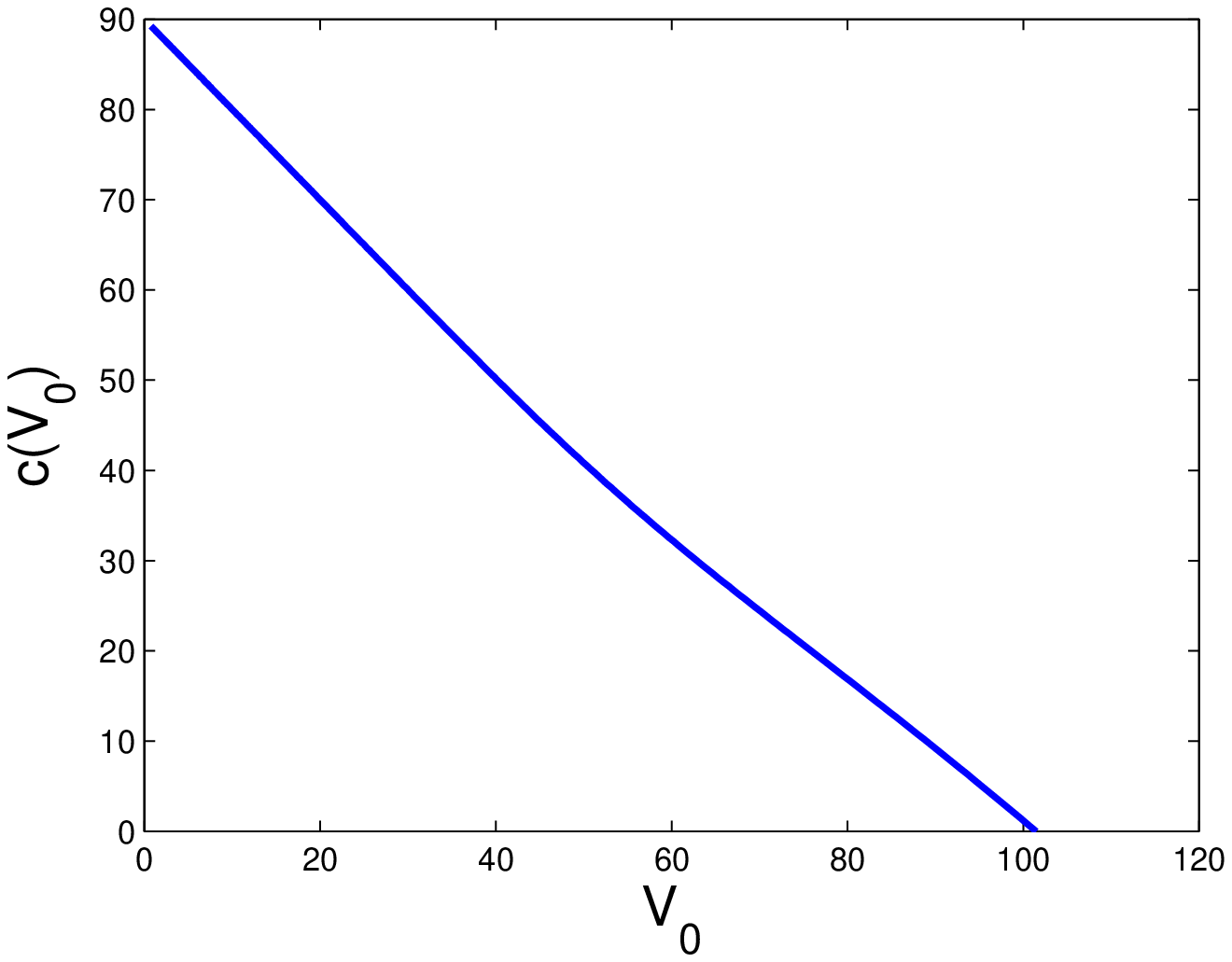}}
\caption{Dependence on model parameters for finite maturity}
\label{finite_fig}
\end{center}
\end{figure}

\section{Concluding remarks}

In this paper we have extended the analysis of \cite{XiaZhou07} for stock loans in incomplete markets. This allows us to consider the realistic situation when the borrower faces trading restrictions and cannot use replication arguments to find the unique arbitrage--free value for the repayment option embedded in such loans. We showed how an explicit expression for the loan fee can still be found in the infinite--horizon case provided the loan interest rate is set to be equal to the risk--free rate. In the finite--horizon case we characterize the loan fee in terms of a free--boundary problem and show how to calculate it numerically. In both cases, we analyzed how the loan fee depends on the underlying model parameters. 

Based on the dependence on correlation and risk--aversion, we find that the complete--market, risk--neutral valuation of a stock loan provides an upper bound for the fee to be charged by the bank. This shows that by following our model a bank can quantify the effects of the restrictions faced by the client thereby charging a smaller fee for the loan, presumably increasing its competitiveness.

%\bibliography{/Users/grasselli/Documents/papers/references/finance}
%\bibliographystyle{abbrv}

\end{document}